\newtheorem{Theorem}{Theorem}[section]
\newtheorem{Conjecture}{Conjecture}[section]
\newtheorem{Definition}{Definition}[section]
\title{\textbf Duality, asymptotic charges and higher form symmetries in $p$-form gauge theories}
\author[a,b]{Federico Manzoni}
\affiliation[a]{Mathematics and Physics department, Roma Tre, Via della Vasca Navale 84, Rome, Italy}
\affiliation[b]{INFN Roma Tre Section, Physics department, Via della Vasca Navale 84, Rome, Italy}
\emailAdd{federico.manzoni@uniroma3.it, federico13.manzoni97@gmail.com, ORCID ID: 0000-0002-9979-6154}
\abstract{The surface charges associated with $p$-form gauge fields in the Bondi patch of $D$-dimensional Minkowski spacetime are computed. We show that, under the Hodge duality between the field strengths of the dual formulations, electric-like charges for $p$-forms are mapped to magnetic-like charges for the dual $q$-forms, with $q=D-p-2$. We observe that the complex combination of electric-like and magnetic-like charges transforms under duality according to a specific M\"obius transformation. This leads to a possible construction of CCFT in $D=4$ as a M\"obius-principal equivariant bundle, together with its associated bundles, in order to recover celestial operators.\\
We prove an existence and uniqueness theorem for the duality map relating the asymptotic electric-like charges of the dual descriptions, and we provide an algebraic-topological interpretation of this map. As a result, the duality map has a topological nature and ensures that the charge of one formulation contains information about the dual formulation, leading to a deeper understanding of gauge theories, the non-trivial charges associated with them, and the duality of their observables. Moreover, we propose a link between higher-form symmetry charges, naturally associated with a $p$-form gauge theory, and their asymptotic charges. The higher-form charges are reproduced by choosing the gauge parameter to be constant and supported only on an appropriate codimension submanifold. This could partially answer an open question in the celestial holography program.}
\begin{document}

\maketitle
%\flushbottom
%\newpage

\section{Introduction}

Gauge theories constitute the foundational language through which modern physics
describes interactions and symmetries across a remarkably wide range of energy scales.
From the Maxwellian formulation of classical electrodynamics \cite{max} to the non-abelian
Yang-Mills theories of the Standard Model, gauge principles organise physical
phenomena by encoding redundancies, local symmetries, and conserved charges in a
single unifying geometric framework \cite{YangMills1954,Jackson1998,Peskin:1995ev}. 
Yet, over the past decades it has become increasingly clear that many essential
aspects of gauge theories only fully emerge in the infrared and at the boundaries of
spacetime \cite{PhysRev.128.2851,Sachs:1962wk,Bondi:1962px,Weinberg:1964ew,Weinberg:1965nx,Strominger:2013jfa,strominger2018lectures}. The systematic study of \emph{asymptotic symmetries}, i.e. symmetries acting non-trivially on a codimension-2 surface, has revealed deep connections between infrared structures, memory effects, conserved charges, and soft theorems
\cite{Strominger_2016,Strominger:2014pwa,Strominger:2017aeh,Pasterski_2016,Lipstein:2015rxa,Hawking:2016msc,Casali:2014xpa,Hamada_2017,He2015,Sen:2017xjn,Liu:2014vva,Pasterski:2015zua,Jokela:2019apz,Hamada:2017gdg,Afshar_2019,Campoleoni:2019ptc,Tolish:2016ggo,BarnichBrandt2002,Barnich:2010ojg,Barnich_2010,Romoli:2024hlc,Compere:2018aar,Manzoni:2025wyr}. These insights have
profound implications ranging from scattering amplitudes to quantum gravity \cite{StiebergerTaylor2018,Duval:1990hj,Donnay:2023mrd,Raclariu:2021zjz,puhm,Pasterski_2021,Ciambelli:2022vot,Manzoni:2024agc}.

A particularly fertile arena for exploring such structures is provided by
$p$-form gauge theories. \textit{Exotic gauge theories}, were originally introduced in the context of higher-dimensional
field theories and extended objects \cite{Henneaux:1986ht} standard Maxwell theory. $p$-form gauge fields now
appear everywhere: in string theory, where forms couple naturally to
Dp-branes, are part of the string spectrum and appear a whole tower of $p$-form states (and mixed symmetry tensor states) in the tensionless limit \cite{Polchinski1995,green_schwarz_witten_2012,Green:2012pqa,Polchinski:1998rq,Polchinski:1998rr,Bonelli_2003}; in supergravity and supersymmetric theories \cite{CremmerJuliaScherk1978,cecotti_2015}
and in holography, where $p$-form bulk fields map to currents in the boundary dual \cite{Maldacena_1999,lYi:1998icc}. More recently,
$p$-form fields have attracted attention in celestial holography 
\cite{Pasterski2017,StiebergerTaylor2018,Raclariu:2021zjz,puhm,Donnay:2023mrd}, where conformal primary wavefunctions
for differential forms were constructed \cite{Donnay:2022ijr}. Beyond the connections with other fields of physics, gauge fields carrying mixed Young-tableau symmetries were shown to admit consistent free gauge theories, provided suitable
gauge-for-gauge redundancies are introduced \cite{Curtright:1980yk,Labastida:1986gy}. 

Alongside the ubiquity of $p$-forms and exotic gauge theory, the notion of \emph{duality} plays a central role in
modern theoretical physics. In Maxwell gauge theory we encounter the familiar
electric-magnetic duality \cite{delrio2024}, but more generic gauge theories exhibit a similar duality between a $p$-form gauge field and its dual $(D-p-2)$-form gauge field counterpart. Moreover, this kind of duality extends also to mixed symmetry tensors gauge theories \cite{DeserTeitelboim1976,Henneaux:2004jw,Hull:2001iu,Medeiros_2003}. 
Such dualities permeate string theory, supersymmetric field theories, M-theory, gauge theories and the
broader web of gauge/gravity correspondences. A notable example is the dual formulation
of the graviton in terms of a mixed-symmetry tensor, the so-called Curtright field, which captures the same on-shell degrees of freedom as the metric description \cite{Curtright:1980yk,Bekaert_2007}.
On the one hand, these dualities exchange dynamical fields; on the other, they can
intertwine the corresponding conserved charges and non-perturbative objects. This
has motivated a systematic analysis of how asymptotic charges behave under duality,
an issue only partially understood beyond the classical four dimensional electromagnetic case. 

A further layer of structure emerges with the modern notion of 
\emph{generalized global symmetries} \cite{Gaiotto:2014kfa, Bhardwaj:2023wzd,Bhardwaj:2023ayw}. In a $p$-form gauge theory, the integrals of the
field strength over appropriate cycles define conserved higher-form charges, which in
turn correspond to topological defects and charged extended objects. Understanding
how such generalized symmetries manifest at null infinity and how they relate to the
asymptotic symmetry algebra, constitutes a pressing question especially within
celestial holography, where establishing a precise dictionary between bulk charges
and CCFT operators remains an open challenge, so much so that it leads to the third question of the Simons collaboration's open questions on celestial holography \cite{simons}.

The present work studies these interconnected themes. Our
starting point, Section \ref{par2}, presents a systematic analysis of the asymptotic charges associated with 
$p$-form gauge fields in $D$-dimensional Minkowski spacetime in Bondi coordinates. This is largely review material \cite{manz2} although some new insights are added, both physical and mathematical, as well as first proposal for the construction of CCFTs starting from principal bundles. We assume both the physical radiation and Coulomb fall-offs since the first are associated to radiative degrees of freedom while the second to static distribution of charged objects. The resulting \emph{electric-like} asymptotic charges, expressed in terms of the $ru$ components of the field strength,
generalise the familiar charges of $D=4$ electromagnetism while being exhibiting a universal
structure across dimensions and gauge form field degree. Similar computations are presented in \cite{Afshar:2018apx,Esmaeili:2020eua}, however authors focus only on critical dimension $D_c=2p+2$. We then incorporate the dual description in terms of the dual $(D-p-2)$-form
field. By mapping the electric sector of one formulation to the magnetic sector of the
dual, we construct \emph{magnetic-like} charges and show that the pair assembles into
a complexified electromagnetic-like charge generating a complexified $\mathfrak{u}_{\mathbb{C}}(1)$ abelian algebra as in the $D=4$ case. Remarkably, this complex charge 
transforms under duality via a Möbius transformation parametrised by a matrix in
$\mathrm{PGL}(2,\mathbb{C})$. This observation resonates with the algebraic structure appearing
in celestial CFT, where Lorentz transformations act as Möbius maps on the celestial
sphere \cite{Donnay:2023mrd,Pasterski_2021,Raclariu:2021zjz,strominger2018lectures}. The last part of Section \ref{par2} contains novel contributions it explores some steps toward a \textit{geometrization of celestial CFT} inspired by the Möbius structure uncovered in the duality of asymptotic charges. We propose that a
CCFT may be viewed as a $\mathrm{PGL}(2,\mathbb{C})$-principal Möbius-equivariant bundle over the
celestial sphere together with its spin bundle lift, with celestial operators arising as sections of appropriate associated bundles.
This geometric viewpoint naturally accommodates tensorial and spinorial structures,
clarifies how conformal weights arise, and provides a framework for encoding duality
as an equivariant action. Moreover, descendants and operator product expansions may
be organised using jet bundles, suggesting a coherent geometric structure underlying
the full CCFT spectrum. While preliminary, this proposal, which is similar to other proposals for the geometrization of QFT such as factorization algebras and bundles that emerge in twistor theory \cite{CostelloGwilliam2017,CostelloGwilliam2021,adamo2018lectures}, opens promising avenues for
understanding how CCFTs are explicitly realized.

A central result of the paper, formulated in Section \ref{secTHM}, is an \emph{existence and uniqueness theorem} for the
duality map between finite and non-vanishing asymptotic electric-like charges of the two formulations. 
The natural mathematical setting for this result is the de~Rham complex and its
exactness properties on Minkowski spacetime. The triviality of the de Rham
cohomology ensures that the duality map is well defined, invertible, and \emph{topological} in 
nature. This perspective hints at possible breakdowns
of the duality in quantum gravity scenarios where topology change becomes allowed
\cite{Horowitz1989,Manzoni:2022htx,Witten1995CG,greene1997stringtheorycalabiyaumanifolds}. We also discuss the duality when asymptotic charges are divergent/vanishing.

In Section \ref{sec4}, we propose a connection between asymptotic charges of $p$-form gauge theories and the
higher-form symmetry charges naturally associated with $p$-form gauge fields. 
Integrating the field strength over suitable codimension submanifolds produces the
usual generalized charges, while smearing the local currents with appropriate residual
gauge parameters of the dual description yields charges proportional to the asymptotic
ones. This provides evidence that asymptotic symmetries may furnish a local refinement of higher-form symmetries at null infinity,
contributing to the ongoing effort to relate generalized symmetries, infrared
structures in gauge theories and holography \cite{Afshar:2018apx,Lake:2018dqm,Francia:2018jtb,Hofman_2018}.

\section{\texorpdfstring{$p$}{p}-form gauge theories, their duality and their asymptotic charges}\label{par2}
Let us discuss $p$-form gauge theories, their duality and the computation of their asymptotic charges in arbitrary $D$. This material retraces the work \cite{manz2}, but adds details and some new features like Paragraph \ref{CCFT}.
\subsection{\texorpdfstring{$p$}{p}-form gauge theories gauge invariant dynamics}\label{pfgdi}
The gauge-invariant dynamics of a $p$-form is described by the Lagrangian \cite{Henneaux:1986ht}
\begin{equation}
    \mathcal{L}=-\frac{1}{2(p+1)!}H_{\mu_1...\mu_{p+1}}H^{\mu_1...\mu_{p+1}}
    \label{lagrangian}
\end{equation}
where $H$ is the $(p+1)$-form field strength of the $p$-form gauge field $B$. The equations of motion, in coordinates, are
\begin{equation}
\partial_{\mu_{1}}H^{\mu_{1}...\mu_{p+1}}=0.
\end{equation}
The description of $p$-form fields in a gauge-invariant lagrangian framework was motivated by the possibility of a $p$-form electromagnetism, i.e. a gauge theory whose gauge field is a $p$-form. It is interesting to note \cite{Henneaux:1986ht} that the authors show that this extension of standard Maxwell theory is compatible with spacetime locality only if the gauge group is the abelian $\mathrm{U(1)}$. \\
The gauge transformation of the $p$-form is parametrized by a $(p-1)$-form $\epsilon$. Indeed, removing one box from the Young tableau of the $p$-form yields the Young tableau of a $(p-1)$-form. The gauge transformation, in coordinates, is then
\begin{equation}
    \delta_{\epsilon}(B_{\mu_{2}...\mu_{p+1}})=\partial_{[\mu_{2}}\epsilon_{\mu_{3}...\mu_{p+1}]};
    \label{gaugevar}
\end{equation}
This is the first level of the gauge-for-gauge redundancy typical of exotic gauge theories. Indeed, we can eliminate one box from the Young tableau of the $(p-1)$-form gauge parameter, obtaining a $(p-2)$-form that parametrizes the gauge transformation of $\epsilon$. These steps repeat until reaching a gauge-for-gauge level parametrized by a $0$-form, i.e. a scalar. 

In order to discuss asymptotic symmetries at future null infinity of Minkowski spacetime, we introduce Bondi coordinates $(u,r,{x^i})$, where $u := t - r$ and the set ${x^i}$ contains the $D-2$ angular variables parametrizing the $(D-2)$-dimensional sphere $S^{D-2}$ at null infinity. In the following, we will refer to angular indices simply as $i, j, k, \ldots \ $. The Minkowski line element in these coordinates is given by
\begin{equation}
    ds^2=-du^2-2dudr+r^2\gamma_{ij}dx^idx^j \ \ \ i,j=1,...,D-2,
\end{equation}
where $\gamma_{ij}$ is the unit $(D-2)$-sphere metric.
Metric components are
\begin{equation}
   g_{\mu \nu} =\begin{bmatrix}
-1 &-1 & 0\\
-1 & 0 & 0\\
0 & 0 & r^2\gamma_{ij}
\end{bmatrix}, \ \ \ 
 g^{\mu \nu} =\begin{bmatrix}
0 & -1 & 0\\
-1 & 1 & 0\\
0 & 0 & \frac{1}{r^2}\gamma_{ij}^{-1}
\end{bmatrix};
\end{equation}
and the non-vanishing Christoffel symbols are
\begin{equation}
    \Gamma^i_{jr}=\Gamma^i_{rj}=\frac{1}{r}\delta^i_j, \ \ \ \Gamma^u_{ij}=-\Gamma^r_{ij}=r\gamma_{ij}, \ \ \ \Gamma^k_{ij}=\frac{1}{2}\gamma^{kl}(-\partial_l\gamma_{ij}+\partial_j\gamma_{li}+\partial_i\gamma_{jl});
    \label{cribon}
\end{equation}
where we have assumed the Levi-Civita connection, which always exists (and is unique) thanks to the Levi-Civita theorem as long as the metric is non-degenerate.
The transformation from retarded Bondi to polar coordinates is given by
\begin{equation}
    \{x^{\alpha}\}=(u,r,x^i) \mapsto \{x^{a'}\}=(u-r,r,rf_1(\{x^i\}),rf_2(\{x^i\}),...,rf_{D-2}(\{x^i\}))
\end{equation}
where ${x^i}$ is a set of $D-2$ angular variables. For future convenience, we define $\mathbb{I}_{\{x^i\}} := \{\mu \mid x^{\mu} \in \{x^i\}\}$, which is the set of indices such that the corresponding coordinate is an angular variable. In Bondi coordinates, fields acquire additional powers of $r$ due to the Jacobian of the coordinate transformation. In general, for each angular index $\mu \in \mathbb{I}_{\{x^i\}}$ we gain one extra factor of $r$ with respect to the Cartesian component. For the radiation fall-offs, when $s \leq p$ indices are angular ones, we have
\begin{equation}
r^s\mathcal{O}\bigg(r^{-\frac{(D-2)}{2}}\bigg)=\mathcal{O}\bigg(r^{\frac{-(D-2-2s)}{2}}\bigg);
\label{asirad}
\end{equation}
while for the Coulomb fall-offs, when $s\leq p$ indices are angular one, we get
\begin{equation}
r^s\mathcal{O}\bigg(r^{-(D-p-2)}\bigg)=\mathcal{O}\bigg(r^{-(D-p-2-s)}\bigg).
\label{asiCou}
\end{equation}

\subsection{Duality in $p$-form gauge theories}
A Young tableau $\lambda = \{\lambda_1, \ldots , \lambda_{\ell}\}$ with $\ell \leq D$ defines an irreducible representation of $\mathrm{O}(D-2)$ if and only if $n^{(\lambda)}_1 + n^{(\lambda)}_2 \leq D-2$. Here $n^{(\lambda)}_1$ and $n^{(\lambda)}_2$ are the numbers of boxes in the first two columns of the Young tableau. The same condition holds for $\mathrm{SO}(D-2)$. In the case of pseudo-orthogonal and pseudo-special orthogonal groups, the same result applies.
However, different Young tableaux can be related by a duality: given $\lambda$ and $\xi$ such that $n_{1}^{(\lambda)} = D - 2 - n_1^{(\xi)}$ and $n_i^{(\lambda)} = n_i^{(\xi)}$ for all $i > 1$, we say that $\lambda$ and $\xi$ are dual. Moreover, for $\mathrm{SO}(D-2)$ these two Young tableaux define the same irreducible representation \cite{hamermesh1989group}. It is important to underline that the Young tableaux $\lambda$ and $\xi$ are completely uncorrelated as Young tableaux labeling different irreducible representations of $\mathrm{GL}(D)$. From a physical point of view, this means that the particles carrying the irreducible representations $\lambda$ and $\xi$ are not dual off-shell, but they propagate the same degrees of freedom once we go on-shell.\\
Without loss of generality, we can restrict ourselves to the cases $p \leq \big[\frac{D-2}{2}\big]$, since the remaining cases are already accounted for by the dual description. The $p$-form representation of $\mathrm{SO}(D-2)$ has the following Young tableau $\lambda$
\begin{equation}
    \centering
    \begin{tabular}{r@{}l}
    \raisebox{-2.3ex}{$p\left\{\vphantom{\begin{array}{c}~\\[4ex] ~
    \end{array}}\right.$} &
    \begin{ytableau}
    ~                \\
    \none[\vdots]    \\
    ~                \\      
    \end{ytableau}\\[-1.5ex]
    \end{tabular}
\end{equation}
If we now apply the rules given before we can construct a dual Young tableau $\xi$ with $p=D-2-n_1^{(\xi)}$ and $0=n_i^{(\xi)}$, that is
\begin{equation}
    \centering
    \begin{tabular}{r@{}l}
    \raisebox{-3.9ex}{$q\left\{\vphantom{\begin{array}{c}~\\[6ex] ~
    \end{array}}\right.$} &
    \begin{ytableau}
    ~                \\
    ~                \\
     \none[\vdots]    \\
    ~                \\       
    \end{ytableau}\\[-1.5ex]
    \end{tabular}
\end{equation}\\
hence, for a $q$-form to be dual to a $p$-form we need $q=D-2-p \geq \frac{D-2}{2}$. In the following we will call $B$ the $p$-form gauge field and $H= dB$ its field strength, while we will call $\tilde{B}$ the dual $q$-form gauge field and $\tilde{H}=d\tilde{B}$ its field strength. The duality at the level of the field strengths reads $\tilde{H}=\star H$ and $H=(-1)^{(p+1)(q+1)}s\star \tilde{H}$ where $s$ is the sign of the determinant of the metric. The gauge parameters are, respectively, $\epsilon$ and $\tilde{\epsilon}$.\\
\subsection{Lorenz-like gauge fixing, residual gauge and asymptotic charges}\label{anasin}
Let us briefly discuss Lorenz-like gauge fixing in the context of $p$-form gauge theories. The equations of motion can be simplified by fixing a Lorenz-like gauge; hence the equations of motion reduce to
\begin{equation}
\Box B_{\mu_2...\mu_{p+1}}=0.
\label{eqmotred}
\end{equation}
This can be achieved by requiring \cite{manz2}
\begin{equation}
    \Box \epsilon_{\mu_3...\mu_{p+1}}=0, \quad \Box \epsilon_{\mu_4...\mu_{p+1}}=0, \quad \hdots \quad
    \Box \epsilon_{\mu_{p+1}}=0, \quad
    \Box \epsilon=0.
\end{equation}

The computation of asymptotic charges for $p$-form gauge fields has also been considered in \cite{Esmaeili:2020eua,Afshar:2018apx}. However, the authors focus on the critical dimension $D_c=2p+2$, in which the field strengths of the dual descriptions are forms of the same degree, namely $p+1$. The present work generalizes these results to arbitrary dimensions and partially reproduces their results in the special case $D_c=2p+2$ since the authors assume a different starting point. 

The asymptotic charge relevant for our discussion has its roots in the covariant phase space formalism à la Wald \cite{Lee:1990nz,Ciambelli:2022vot}; we consider the N\"other charge
\begin{equation}
    Q_{p,D}:=\lim_{r \rightarrow +\infty}\oint_{S^{D-2}_u} k^{ur}_{p,D} r^{D-2} d\Omega ,
    \label{leewald}
\end{equation}
where we already specialize to the codimension-2 celestial sphere at Minkowski null infinity. In \eqref{leewald}, $k^{\mu \nu}_{p,D}$ is the N\"other two-form\footnote{This is the generalization to the case of a generic $p$-form, computed in the Lee-Wald formalism, of the well-known N\"other two-form in electromagnetism.} for a $p$-form in dimension $D$, $S_u^{D-2}$ is the celestial sphere at Minkowski null infinity, and $r^{D-2}d\Omega$ is its integration measure. To compute the N\"other two-form $k^{\mu \nu}_{p,D}$ we first compute, by varying the lagrangian, the presymplectic potential
\begin{equation}
    \Theta^{\mu_1}=-\frac{p+1}{(p+1)!}H^{\mu_1...\mu_{p+1}}\delta B_{\mu_2...\mu_{p+1}}
\end{equation}
from which we get the N\"other two-form
\begin{equation}
    k^{\mu_1\mu_2}_{p,D}=-\frac{(p+1)p}{(p+1)!}\epsilon_{\mu_3...\mu_{p+1}}H^{\mu_{1}...\mu_{p+1}}=:\Lambda_p \epsilon_{\mu_3...\mu_{p+1}}H^{\mu_{1}...\mu_{p+1}},  \qquad p>0,
\end{equation}
which is antisymmetric in $\mu_1$ and $\mu_2$ due to the total antisymmetry of $H^{\mu_{1}...\mu_{p+1}}$. For $p=1$ this reduces to the well-known N\"other two-form of electromagnetism. We also note that it would be linear in the field variations, which makes the verification of the integrability condition straightforward. The expression of the N\"other charge in the Bondi patch is given explicitly by
\begin{equation}
\begin{aligned}
    Q_{p,D}=&\Lambda_p\lim_{r \rightarrow +\infty}\oint_{S^{D-2}_u} \epsilon_{\mu_3...\mu_{p+1}}H^{ur\mu_3...\mu_{p+1}}r^{D-2}d\Omega=\\
    =&\Lambda_p\lim_{r \rightarrow +\infty}\oint_{S^{D-2}_u} \epsilon_{i_1...i_{p-1}}H^{uri_1...i_{p-1}}r^{D-2}d\Omega=\\
    =&\Lambda_p\lim_{r \rightarrow +\infty}\oint_{S_u^{D-2}}\gamma^{i_1j_1}...\gamma^{i_{p-1}j_{p-1}}\epsilon_{i_1...i_{p-1}}H_{ru j_1...j_{p-1}}r^{D-2p}d\Omega,
    \label{caricapforma}
\end{aligned}    
\end{equation}
which can be recognized as an electric-like charge since it contains field strength components $H_{ru j_1...j_{p-1}}$, and we will refer to it as $Q^{\mathrm{(e)}}_{p,D}$. We are able to partially reproduce, in $D_c=2p+2$, the results of \cite{Esmaeili:2020eua,Afshar:2018apx}, since the authors assume a different starting point invoking a well-defined variational principle for the action. In this sense we reproduce the charge associated with their bulk action term. This could lead to non-conservation of charges, but adopting a more modern perspective this can be a feature rather than an issue. Indeed, one should rather associate to non-conserved charges the meaning of observables, which can evolve
in time and become conserved only in the vacuum of the theory. 

Moreover, the N\"other charge \eqref{caricapforma}, as well as its dual counterpart, which will be introduced in the next paragraph, can have different behaviors when we make the limit explicit. On the one hand, charges can be well-defined, meaning that in both dual descriptions the charges are finite, non-vanishing and have the same radial behavior in the limit $r \rightarrow +\infty$. On the other hand, charges can be power-law divergent or vanishing; those charges diverge or vanish in the limit $r \rightarrow +\infty$ with a specific polynomial degree of divergence or vanishing. The discussion about duality and the information this duality gives on the charges strongly depends on what type of charge we are dealing with. For this reason, the choice of fall-offs is a fundamental part of the discussion: we focus on Coulomb and radiation fall-offs, which reflect two physically distinct regimes of the field in the limit $r \rightarrow +\infty$ of the Bondi patch of Minkowski spacetime. The Coulomb fall-off arises, for example, from static sources and its behaviour is non-radiative and time-independent, encoding only the presence and distribution of charges (or extended charged objects in the case of $p$-form fields). In contrast, the radiation fall-off is produced by genuinely time-dependent configurations, such as accelerating charges or propagating waves. It corresponds to the flux of energy carried by outgoing radiation and hence to soft radiation at $\mathcal{I}^+$. Physically, the Coulomb term captures the “memory” of the charge distribution, whereas the radiation term describes the dynamical degrees of freedom that reach null infinity of Minkowski spacetime.

With this in mind, we assume a polyhomogeneous expansion, i.e. an expansion containing also logarithmic terms, for $\epsilon_{i_1...i_{p-1}}$
\begin{equation}
    \epsilon_{i_1...i_{p-1}}=\sum_{l\in \frac{1}{2}\mathbb{Z}}\frac{\epsilon_{i_1...i_{p-1}}^{(l)}(u,\{x^i\})+\bar{\epsilon}_{i_1...i_{p-1}}^{(l)}(u,\{x^i\})ln(r)}{r^l}.
        \label{expansionformflin}
\end{equation}
However, to compute the leading term of the expansion, it is not necessary to perform the full computation for a generic $p$-form: the cases $p=1$ and $p=2$ are enough. Indeed, following the $p=1$ and $p=2$ cases, we can ask what the leading order of the gauge parameter components expansion \eqref{expansionformflin} is that preserves the Lorenz-like gauge and the radiation or Coulomb fall-offs of the fields. Let us call this order $X$: so we would have
\begin{equation}
\epsilon_{i_1...i_{p-1}}=\frac{\epsilon_{i_1...i_{p-1}}^{(X)}(\{x^i\})}{r^X}+\sum_{l> X}\frac{\epsilon_{i_1...i_{p-1}}^{(l)}(u,\{x^i\})+\bar{\epsilon}_{i_1...i_{p-1}}^{(l)}(u,\{x^i\})ln(r)}{r^l},
\end{equation}
where $X:=f(p,D)$ is a function of the dimension $D$ and the form degree $p$. The derivatives acting on the gauge parameter to get the gauge transformation do not change the type of function $f(p,D)$ is; therefore, since we have assumed radiation or Coulomb fall-offs, which have a would-be $f(p,D)$ linear both in $D$ and in $p$, we can conclude that $f(p,D)$ for the gauge parameter components \eqref{expansionformflin} must be linear both in $D$ and in $p$. We know its value for $p=1$ and $p=2$ (see \cite{manz2} for the full computation of the $p=1$ and $p=2$ cases); therefore, for radiation fall-offs we have
\begin{equation}
    f(p,D)=a(D)p+b(D), \qquad  \textnormal{with} \quad f(1,D)=\frac{D-4}{2}, \quad f(2,D)=\frac{D-6}{2}, 
\end{equation}
while for Coulomb fall-offs we have
\begin{equation}
\begin{aligned}
f(p,D)=a'(D)p+b'(D),\qquad  \textnormal{with} \quad  f(1,D)=D-4, \quad f(2,D)=D-6.\\
\end{aligned}    
\end{equation}
From the above relations we get $a(D)=-1$ and $b(D)=\frac{D-2} {2}$ for radiation fall-offs, while  $a'(D)=-2$ and $b'(D)=D-2$ for Coulomb fall-offs. Therefore we have
\begin{equation}
\begin{aligned}
    &X_{\text{rad}}:=f(p,D)=-p+\frac{D-2}{2}=\frac{D-(2p+2)}{2} \qquad \textnormal{radiation fall-off};\\
    &X_{\text{Cou}}:=f(p,D)=-2p+D-2=D-(2p+2) \qquad \textnormal{Coulomb fall-off};
\end{aligned}    
\end{equation}
The above computation, for $p$-forms in Minkowski spacetime, gives us the leading-order terms of the asymptotic expansion of $\epsilon_{i_1...i_{p-1}}$.
The asymptotic charge \eqref{caricapforma} has different asymptotic behaviors depending on which fall-offs are chosen for the fields.

For radiation fall-offs, the charge is given by  
\begin{equation}
\begin{aligned}
    &Q^{\mathrm{(e)}}_{p,D} \sim \Lambda_p\oint_{S_u^{D-2}} \gamma^{i_1j_1}...\gamma^{i_{p-1}j_{p-1}} \epsilon_{i_1...i_{p-1}}^{(\frac{D-(2p+2)}{2})}(\{x^i\}) \mathcal{R}^{(p)}\mathcal{O}(r^0)d\Omega,
    \label{chargeprad}
\end{aligned}
\end{equation}
where 
\begin{equation}
\begin{aligned}
 &\mathcal{R}^{(p)}:=H_{ruj_1...j_{p-1}}^{(\frac{D-2p+2}{2})};\\
 &H_{ruj_1...j_{p-1}}^{(\frac{D-2p+2}{2})}=-K_{\text{Rad}}B_{uj_1...j_{p-1}}^{(\frac{D-2p}{2})}-\sum_{k=1}^{p-1}\partial_{j_k}B^{(\frac{D-2p+2}{2})}_{u...j_{k-1}rj_{k+1}...j_{p-1}}-\partial_uB_{rj_1...j_{p-1}}^{(\frac{D-2p+2}{2})}+\bar{B}_{uj_1..j_{p-1}}^{(\frac{D-2p}{2})};\\
 &K_{\text{rad}}:=\frac{D-2p}{2}.
\end{aligned} 
\end{equation}
The asymptotic charge \eqref{chargeprad} is well-defined for every $p$ in every $D$ and it is formally independent of $p$: if we perform the substitution $p \mapsto q$ we obtain again a finite non-vanishing charge. The term $\bar{B}_{uj_1...j_{p-1}}^{(\frac{D-2p}{2})}$ is present only if we start with logarithmic terms in the expansion of the field components.

For the case of Coulomb fall-offs we get
\begin{equation}
\begin{aligned}
    &Q^{\mathrm{(e)}}_{p,D} \sim \Lambda_p\oint_{S_u^{D-2}} \gamma^{i_1j_1}...\gamma^{i_{p-1}j_{p-1}} \epsilon_{i_1...i_{p-1}}^{({D-(2p+2)})}(\{x^i\})\bigg[\mathcal{C}^{(p)}+\bar{\mathcal{C}}^{(p)}ln(r)\bigg]\mathcal{O}(r^{-(D-2p-2)})d\Omega, 
    \label{chargepcoul}
\end{aligned}
\end{equation}
where
\begin{equation}
\begin{aligned}
   &\mathcal{C}^{(p)}:=H_{ruj_1...j_{p-1}}^{(D-2p)}, \quad \bar{\mathcal{C}}^{(p)}:=\bar{H}_{ruj_1...j_{p-1}}^{(D-2p)};\\
   &H_{ruj_1...j_{p-1}}^{(D-2p)}=-K_{\text{Cou}}B_{uj_1...j_{p-1}}^{(D-2p-1)}-\sum_{k=1}^{p-1}\partial_{j_k}B^{(D-2p)}_{u...j_{k-1}rj_{k+1}...j_{p-1}}-\partial_uB_{rj_1...j_{p-1}}^{(D-2p)}+\bar{B}_{uj_1..j_{p-1}}^{(D-2p-1)};\\
   &\bar{H}_{ruj_1...j_{p-1}}^{(D-2p)}=-K_{\text{Cou}}\bar{B}_{uj_1...j_{p-1}}^{(D-2p-1)}-\partial_u\bar{B}_{rj_1...j_{p-1}}^{(D-2p)}; \\
   &K_{\text{Cou}}:=D-2p-1.
\end{aligned}   
\end{equation}
The logarithmic term is present only if we admit, from the beginning, a logarithmic leading order in the Coulomb fall-off expansion of the field components $B_{uj_1...j_{p-1}}$. Moreover, the logarithmic term is vanishing in the critical dimension $D_c=2p+2$ since
\begin{equation}
    \bar{H}_{ruj_1...j_{p-1}}^{(D-2p)}=(D-2p-2)\bar{B}_{uj_1...j_{p-1}}^{(D-2p-1)};
\end{equation}
and the charge \eqref{chargepcoul} has the same leading order as the charge \eqref{chargeprad}, as it should be\footnote{This general result is in agreement with what is known in the literature \cite{Afshar:2018apx}, but it is derived here in a different way.}. We note that, apart from the logarithmic term, which can be avoided by considering it as a pure gauge term, this charge is power-law vanishing in $D>2p+2$ while it has a power-law divergence in $D<2p+2$. Note that the charge obtained by the formal substitution $p \mapsto q$, which is the analogous electric-like charge but in the theory of the $q$-form, is power-law vanishing in $D<2p+2$ while it has a power-law divergence in $D>2p+2$.

The topic of divergent charges opens up questions about their renormalization; power-law divergent charges could be renormalizable with procedures similar to those used in \cite{Romoli:2024hlc,Manzoni:2025wyr}. The point is that counterterms can be added to the bulk action (or directly to the pre-symplectic
potential) to cure these divergences. In the covariant
phase space formalism à la Wald, the definition of charges comes from the Lagrangian. The former
are codimension-2 forms while the latter is a top form. It is therefore not surprising that the
notion of charges comes with built-in ambiguities and it is the choice of these ambiguities that allows us to properly renormalize divergent charges. However, let us wait for Paragraph \ref{dualmap2}  to go into more detail on renormalization topics.\\

\subsection{Magnetic-like charges for $p$-form gauge theories}\label{dualitymap}
\begin{comment}
\begin{equation}
\begin{tikzcd}
        & & T(TM) & & T^*(TM)\\
        & & T^{*}M \arrow[u, "X_H", swap] \arrow[urr, "\delta_H", swap]\\
   R    \arrow[uurr, "X_{H}\circ(X\circ\varphi)", bend left=45]  
        \arrow[urr, "\alpha\circ\varphi", swap, bend left] 
        \arrow[rr] 
        \arrow[drr, "X\circ\varphi", bend right]  
        \arrow[ddrr, "X_{L}\circ(X\circ\varphi)", swap, bend right=45] 
        & & M \arrow[u, "\alpha", swap] \arrow[d, "X"] \\
        & & TM \arrow[d, "X_L"] \arrow[drr, "\delta_L"]\\
        & & T(TM) & & T^*(TM)
\end{tikzcd}
\end{equation}
\end{comment}
We focus on the case of well-defined charges, in which the charges of both dual descriptions are finite and non-vanishing in the $r \rightarrow +\infty$ limit: there is no restriction in the case of radiation fall-offs, while we need to restrict to $D_c=2p+2$ in the case of Coulomb fall-offs\footnote{Indeed, if $D=2p+2$ we have that $D-2p-2=0$ and $D-2q-2=D-2(D-p-2)-2=-D+2p+2=0$.}. Given the electric-like charge of a $p$-form gauge theory \eqref{caricapforma} we can construct, by using the Hodge duality equations, a magnetic-like charge that contains the field strength components $\tilde{H}_{j_1...j_{q+1}}^{(\frac{D-(2q+2)}{2})}$. Let us consider the electric-like charge involving radiation fall-offs \eqref{chargeprad} for both the $p$-form and the $q$-form theory:
\begin{equation} \label{Qpq2D}
\begin{aligned}
    &Q^{\mathrm{(e)}}_{p,D} \sim \Lambda_p\oint_{S_u^{D-2}} \gamma^{i_1j_1}\cdots \gamma^{i_{p-1}j_{p-1}} \epsilon_{i_1\cdots i_{p-1}}^{(\frac{D-(2p+2)}{2})}(\{x^i\}) H_{ruj_1\cdots j_{p-1}}^{(\frac{D-2p+2}{2})} )d\Omega; \\
    &Q^{\mathrm{(e)}}_{q,D} \sim \Lambda_q\oint_{S_u^{D-2}} \gamma^{i_1j_1}\cdots \gamma^{i_{q-1}j_{q-1}} \tilde{\epsilon}_{i_1\cdots i_{q-1}}^{(\frac{D-(2q+2)}{2})}(\{x^i\}) \tilde{H}_{ruj_1\cdots j_{q-1}}^{(\frac{D-2q+2}{2})}d\Omega.
\end{aligned}
\end{equation}
From the Hodge duality equations between $H$ and $\tilde{H}$ we get, in particular,
\begin{equation} \label{dualpqD}
\begin{aligned}
    &\frac{H_{i_1\cdots i_{p+1}}}{r^{D-2-2(q-1)}} =(-1)^{(p+1)(q+1)}s \gamma^{k_1j_1}\cdots \gamma^{k_{q-1}j_{q-1}}\varepsilon_{urk_1\cdots k_{q-1}i_1\cdots i_{p+1}}\tilde{H}_{ruj_1\cdots j_{q-1}}\sqrt{\gamma};\\
    &\frac{\tilde{H}_{i_1\cdots i_{q+1}}}{r^{D-2-2(p-1)}} =\gamma^{k_1j_1}\cdots \gamma^{k_{p-1}j_{p-1}}\varepsilon_{urk_1\cdots k_{p-1}i_1\cdots i_{q+1}}H_{ruj_1\cdots j_{p-1}}\sqrt{\gamma}.
\end{aligned}
\end{equation}
where $s$ is the of the determinant of the metric tensor matrix representation. Moreover, looking at the Hodge dual equation with one index $u$ and the remaining ones angular indices gives us
\begin{equation}
    H_{ui_1\cdots i_p}^{(\frac{D-2p-2}{2})}=(-1)^{(p+1)(q+1)}s\varepsilon_{uri_1\cdots i_pk_1\cdots k_q}\gamma^{j_1k_1}\cdots \gamma^{j_qk_q}\tilde{H}_{uj_1\cdots j_q}^{(\frac{D-2q-2}{2})}\sqrt{\gamma},
\end{equation}
which implies a relation between the asymptotic electric and magnetic vector potentials, up to an angle-dependent integration constant which can be fixed in such a way that $B_{i_1\cdots i_p}^{(\frac{D-2p-2}{2})}$ and 
$\tilde{B}_{i_1\cdots i_q}^{(\frac{D-2q-2}{2})}$ are proportional. 

We would like to employ \eqref{dualpqD} to trade the integrands in \eqref{Qpq2D} for angular components of the corresponding dual field strengths. Making use of this relation we can map the electric-like charge for the $p$-form to a magnetic-like charge for the $q$-form\footnote{We note that by introducing the appropriate volume form on the sphere, \eqref{hodgepq} can be rewritten in a manifestly covariant form.}:
\begin{equation} \label{hodgepq}
\begin{aligned}
Q^{\mathrm{(e)}}_{p,D} \mapsto \tilde{Q}^{\mathrm{(m)}}_{q,D} \sim  \Lambda_p\oint_{S_u^{D-2}}\frac{\varepsilon^{i_1\cdots i_{p-1}j_1\cdots j_{q+1}}{\epsilon}_{i_1\cdots i_{p-1}}^{(\frac{D-(2p+2)}{2})}(\{x^i\})}{\sqrt{\gamma}}\tilde{H}_{j_1\cdots j_{q+1}}^{(\frac{D-(2q+2)}{2})} d\Omega,
\end{aligned}    
\end{equation}
To obtain the magnetic-like charge $\tilde{Q}^{\mathrm{(m)}}_{p,D}$, associated to $Q^{\mathrm{(e)}}_{q,D}$, it is enough to exchange $p \leftrightarrow q$ in \eqref{hodgepq} and taken into account the right numerical factor due to the Hodge operator.

Between the electric-like and magnetic-like charges there is a generalization of the relations holding for $D=4$ electromagnetism
\begin{equation}
    Q^\mathrm{(e)}_{p,D}=\tilde{Q}^\mathrm{(m)}_{q,D}, \ \ \ Q^\mathrm{(e)}_{q,D}=(-1)^{(p+1)(q+1)}s\tilde{Q}^\mathrm{(m)}_{p,D}.
    \label{eqelecmagn}
\end{equation}
where $s$ is the sign of the determinant of the metric tensor matrix representation. Moreover,
following \cite{Strominger_2016}, we can merge electric-like and magnetic-like asymptotic charges into an electromagnetic-like one
\begin{equation}
\begin{aligned}
&\mathcal{Q}^{\mathrm{(em)}}_{p,D}:=Q^\mathrm{(e)}_{p,D}+i\tilde{Q}^\mathrm{(m)}_{p,D}, \quad \mathcal{Q}^\mathrm{(em)}_{q,D}:=Q^\mathrm{(e)}_{q,D}+i\tilde{Q}^\mathrm{(m)}_{q,D},
\label{electromagncharge}
\end{aligned}    
\end{equation}
which, as in the case of electromagnetism in $D=4$, generate a complexified $\mathfrak{u}_{\mathbb{C}}(1)$ symmetry algebra; the first one in the theory described by the $p$-form while the second one in the theory described by the dual $q$-form. Under duality these charges transform according to M\"obius transformations which can be parametrized by the matrix $A \in \mathrm{PGL}(2,\mathbb{C})$ given by
\begin{equation}
    A=\begin{bmatrix}
0 & 1 \\
(-1)^{(p+1)(q+1)}s & 0  \\
\end{bmatrix};
\label{apgl}
\end{equation}
indeed, since $\mathbb{C} \cong \mathbb{R}^2$, we have
\begin{equation}
\begin{aligned}
    &\mathcal{Q}^\mathrm{(em)}_{p,D} \cong \begin{bmatrix}
Q_{p,D}^\mathrm{(e)}  \\
\tilde{Q}_{p,D}^\mathrm{(m)}  \\
\end{bmatrix}, \ \ \  \mathcal{Q}^\mathrm{(em)}_{q,D} \cong \begin{bmatrix}
Q_{q,D}^\mathrm{(e)}  \\
\tilde{Q}_{q,D}^\mathrm{(m)}  \\
\end{bmatrix}; \\
&\tilde{\mathcal{Q}}^\mathrm{(em)}_{q,D} \cong \begin{bmatrix}
\tilde{Q}_{q,D}^\mathrm{(m)}  \\
(-1)^{(p+1)(q+1)}sQ_{q,D}^{\mathrm{(e)}} \\
\end{bmatrix}, \ \ \  \tilde{\mathcal{Q}}^{\mathrm{(em)}}_{p,D} \cong \begin{bmatrix}
(-1)^{(p+1)(q+1)}s\tilde{Q}_{p,D}^{\mathrm{(m)}}  \\
Q_{p,D}^{\mathrm{(e)}}  \\
\end{bmatrix}, \ \ \ 
\end{aligned}
\end{equation}
and so
\begin{equation}
    \begin{bmatrix}
\tilde{Q}_{q,D}^{\mathrm{(m)}}  \\
(-1)^{(p+1)(q+1)}sQ_{q,D}^{\mathrm{(e)}}  \\
\end{bmatrix}=A
\begin{bmatrix}
 Q_{q,D}^{\mathrm{(e)}}  \\
\tilde{Q}_{q,D}^{\mathrm{(m)}}  \\
\end{bmatrix}, \ \ \ \ \ \begin{bmatrix}
(-1)^{(p+1)(q+1)}s\tilde{Q}_{p,D}^{\mathrm{(m)}}  \\
Q_{p,D}^{\mathrm{(e)}}  \\
\end{bmatrix}=A^{-1}
\begin{bmatrix}
 Q_{p,D}^{\mathrm{(e)}}  \\
\tilde{Q}_{p,D}^{\mathrm{(m)}}  \\
\end{bmatrix}.
\end{equation}
\begin{figure}[H]
    \centering  
    \includegraphics[width=9cm]{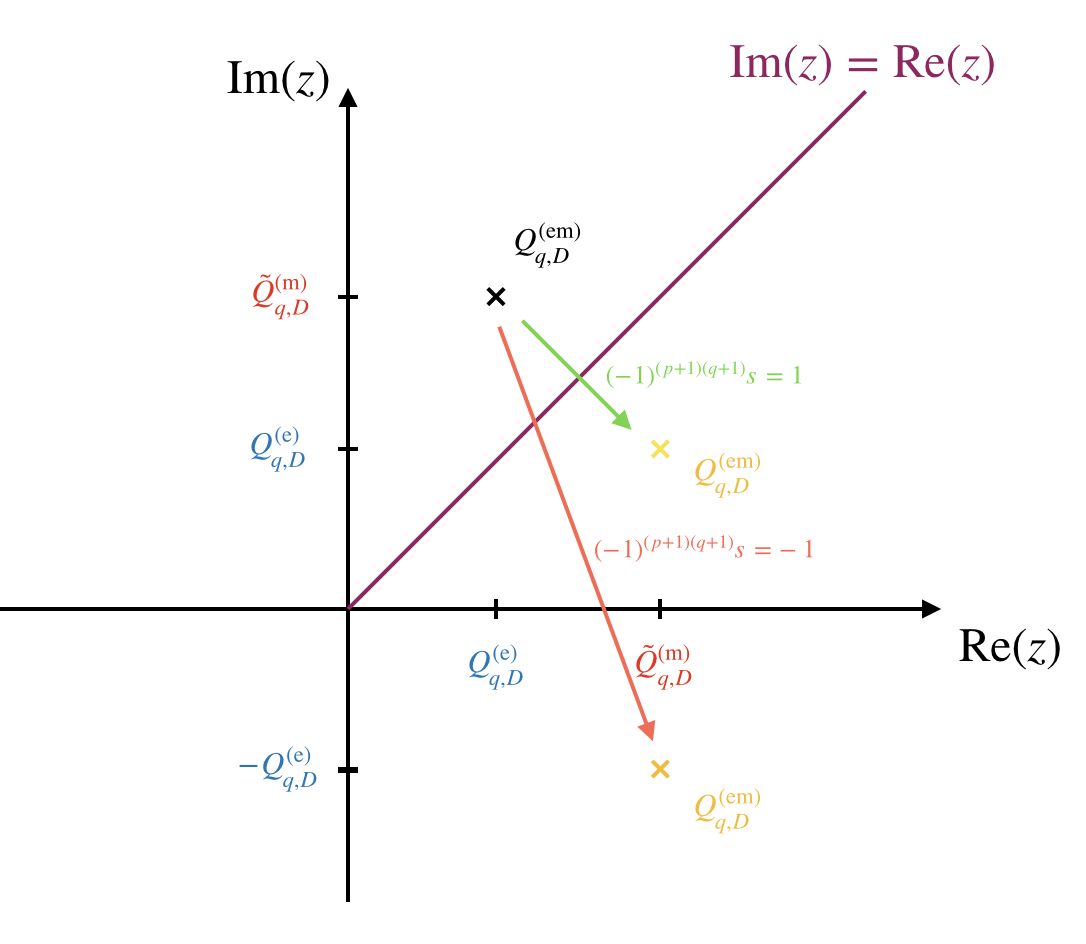}
    \caption{\textit{Schematic picture of the action of the duality on the complex electromagnetic-like asymptotic charge $\mathcal{Q}_{q,D}^{(\mathrm{em})}$. A similar scheme can be drawn for $\mathcal{Q}_{p,D}^{(\mathrm{em})}$.}}
    \label{dualityoncharges}
\end{figure}
According to the value of $(-1)^{(p+1)(q+1)}s$, the M\"obius transformation is a reflection with respect to the bisector of the first quadrant if $(-1)^{(p+1)(q+1)}s=1$, and a rotation of $\theta=\pm\frac{\pi}{2}$ (clockwise on $\mathcal{Q}^{\mathrm{(em)}}_{q,D}$ and counterclockwise on $\mathcal{Q}^{\mathrm{(em)}}_{p,D}$) if $(-1)^{(p+1)(q+1)}s=-1$. 
The M\"obius structure of the duality seems not to be related in an simple way to the lorentzian $\mathrm{SL}(2,\mathbb{C})$ acting on the celestial sphere. However, we can geometrize both action in a unified framework involving principal and equivariant bundles on the celestial sphere $S^2\cong \mathbb{CP}^1$.
\subsection{Geometrization of CCFT}\label{CCFT}

The above observation could be interesting for celestial holography and CCFTs. Let us recall that the group of M\"obius transformations in $D=4$ acts on the celestial sphere in several equivalent descriptions. In
fact, one has the natural chain of isomorphisms
\begin{equation}
    \mathrm{PGL}(2,\mathbb{C})
    \;\cong\;
    \mathrm{PSL}(2,\mathbb{C})
    \;\cong\;
    \mathrm{SO}^+(3,1) \, .
\end{equation}
The first isomorphism follows from the observation that any matrix
\(A \in \mathrm{GL}(2,\mathbb{C})\) can be rescaled to have determinant
\(1\) defining 
\begin{equation}
    \tilde A := (\det A)^{-1/2} A \in \mathrm{SL}(2,\mathbb{C}) ,
\end{equation}
where the factor \((\det A)^{-1/2}\) exists because every nonzero complex number admits a square root. Thus, by modding for scalar matrices,
\begin{equation}
    \mathrm{PGL}(2,\mathbb{C})
    \;\cong\;
    \mathrm{SL}(2,\mathbb{C}) / \{\pm I\}
    \;=\;
    \mathrm{PSL}(2,\mathbb{C}) \, .
\end{equation}
The second isomorphism follows form the realization that \(\mathrm{SL}(2,\mathbb{C})\) is the double covering of $\mathrm{SO}^+(3,1)$. Hence from the following short exact sequence 
\begin{equation}
    1 \;\rightarrow\; \{\pm I\}
    \;\rightarrow\; \mathrm{SL}(2,\mathbb{C})
    \;\rightarrow\; \mathrm{SO}^{+}(3,1)
    \;\rightarrow\; 1
    \label{seqesat}
\end{equation}
we get the desired isomorphism\footnote{The standard theoretical physics venue to present this isomorphism is by letting \(A \in \mathrm{SL}(2,\mathbb{C})\) act by
conjugation on \(2\times 2\) hermitian matrices:
\[
    X \;\longmapsto\; A X A^\dagger ,
\]
after identifying Minkowski vectors with hermitian matrices via
\[
    x^\mu \;\longleftrightarrow\;
    X = x^\mu \sigma_\mu , \qquad
    \sigma_\mu = (\mathbb{I}, \sigma_i) .
\] 
The kernel is precisely \(\{\pm I\}\), hence
\[
    \mathrm{PSL}(2,\mathbb{C})
    \;\cong\;
    \mathrm{SO}^+(3,1) \, .
\]}.
In celestial holography one often works directly with
\(\mathrm{SL}(2,\mathbb{C})\), since it provides the double covering of
\(\mathrm{SO}^+(3,1)\) and is convenient for keeping track of
representations with spin. Here we will follow a slightly different path. Indeed, as a first step, we will consider the proper orthochronous Lorentz group to construct our principal bundle, and we will then use its spin structure to lift it.

The geometrization proposal is to think the CCFT as a $\text{PGL}(2,\mathbb{C})$-principal M\"obius-equivariant bundle on the celestial sphere where the structure group captures the Lorentz symmetry (taking into consideration both tensors and spinors representations), while the M\"obius equivariance captures the duality. A principal \(\mathrm{PGL}(2,\mathbb{C})\)-bundle over the celestial sphere \(S^2 \cong \mathbb{CP}^1\) 
is a quadruple \((P,\pi,S^2,\Phi_{\bullet g})\) consisting of a total space \(P\), a projection
\begin{equation}
  \pi : P \rightarrow S^2 ,
\end{equation}
and a right action
\begin{equation}
  \Phi_{\bullet g} : P \times \mathrm{PGL}(2,\mathbb{C}) \rightarrow P ,
  \qquad (p,g) \mapsto \Phi_{\bullet g}(p)=p\bullet g ,
\end{equation}
which is free and transitive on fibers. In particular, the freeness and transitivity of the action imply that each 
fiber can be canonically identified with the group 
\(\mathrm{SL}(2,\mathbb{C})\) itself. Note that both the fiber and the base are $\mathrm{PGL(2,\mathbb{C})}$-spaces. If $g \in \mathrm{PGL(2,\mathbb{C})}$ is represented by
\begin{equation}
  M=\begin{bmatrix}
a & b \\
c & d  \\
\end{bmatrix}, \qquad ad-bc\neq 0;
\end{equation}
then the action on $S^2 \cong \mathbb{C}\mathbb{P}^1$ is given by \begin{equation}
\Phi_{g\cdot} : \mathrm{PGL}(2,\mathbb{C}) \times \mathbb{C}\mathbb{P}^1 \rightarrow \mathbb{C}\mathbb{P}^1 ,
  \qquad (g,z) \mapsto \Phi_{g \cdot}(z)=g \cdot z =\frac{az+b}{cz+d},
\end{equation}  
while the action on the fiber is simply given by matrix multiplication 
\begin{equation}
\Phi_{g}: \mathrm{PGL(2,\mathbb{C})} \times \mathrm{PGL(2,\mathbb{C})} \rightarrow \mathrm{PGL(2,\mathbb{C})},
  \qquad (g,v) \mapsto \Phi_{g}(v)=gv.
\end{equation}  
The bundle is supposed to be locally trivial, i.e. there exist an open covering $\{ U_i \}$ of $S^2\cong \mathbb{CP}^1$ and local trivializations
\begin{equation}
\varphi_i : \pi^{-1}(U_i) \longrightarrow U_i \times  \text{SL}(2,\mathbb{C}) ,
\end{equation}
such that, on the overlaps $U_i \cap U_j$,the relations
\begin{equation}
\varphi_j \circ \varphi_i^{-1}(x,f)
= \bigl( x ,\, t_{ij}(x)\cdot f \bigr) \, ,
\qquad x \in U_i \cap U_j \, ,
\end{equation}
where
\begin{equation}
t_{ij} : U_i \cap U_j \longrightarrow \mathrm{PGL(2,\mathbb{C})},
\end{equation}
hold. The right action can be defined as 
        \begin{equation}
            \Phi_{\bullet g}(z,v)=(z,v)\bullet g :=(z,vg)
            \label{ades}
        \end{equation}
where $(z,v) \in P \cong_{\text{loc}} U \times \text{PGL}(2,\mathbb{C})$ where $U$ is an open neighborhood of $z$. We can also show that the bundle is equivariant, i.e. $\pi \circ \Phi_{g*}=\Phi_{g \cdot} \circ \pi$, with respet to the $\text{PGL}(2,\mathbb{C})$ action defined on the total space as
\begin{equation}
 \Phi_{g*} : \mathrm{PGL}(2,\mathbb{C}) \times P \rightarrow P ,
  \qquad (g,(z,v)) \mapsto  \Phi_{g*}(z,v)=g * (z,v):= (g \cdot z, gv).
   \label{asin}
\end{equation}
Indeed, 
\begin{equation}
    g \cdot z = \Phi_{g\cdot} \circ \pi(z,v), \qquad  \pi \circ ( \Phi_{g*}(z,v))=\pi(g \cdot z, gv)= g \cdot z.
\end{equation}
Moreover the two actions on $P$ commute, i.e. $\Phi_{g*} \circ \Phi_{h \bullet}=\Phi_{h\bullet} \circ \Phi_{g *}$, since 
\begin{equation}
\Phi_{g*} \circ \Phi_{h\bullet}(z,v)=\Phi_{g*} ( (z,v)\bullet h )
= g * (z,vh)
= ( g\cdot z ,\, g(vh) ) ,
\end{equation}
and
\begin{equation}
\Phi_{g\bullet} \circ \Phi_{g*}(z,v)=\Phi_{g\bullet}( g * (z,v) )
= (g\cdot z ,\, gv)\bullet h
= ( g\cdot z ,\, (gv)h ) .
\end{equation}
and by associativity of matrix multiplication $g(vh)=(gv)h$.

In summary, the relevant object could be a principal 
$\mathrm{PGL}(2,\mathbb{C})$-principal bundle $\pi : P \to S^{2}$ on the celestial sphere (right action \eqref{ades}) endowed with a second left action, given by \eqref{asin}, of the M\"obius group which makes it equivariant. These two actions commute, so $P$ becomes an equivariant principal bundle: the bundle structure is compatible with the symmetry of the system (i.e. the duality). This structure is interesting because celestial operators in the CCFT can be thought as sections of associated bundles. Let us call $\mathcal{A}_{(h,\bar{h})}^{\text{CCFT}}$ the CCFT algebra of operators with fixed $(h,\bar{h})$ and let us taken into account the tensorial representation 
\begin{equation}
\rho_{(h,\bar{h})}: \text{PGL}(2,\mathbb{C}) \rightarrow \text{Aut}(\mathcal{A}_{(h,\bar{h})}^{\text{CCFT}})
\end{equation}
which implements the transformation law 
\begin{equation}
O_{h,\bar h}(z',\bar z') 
= \left( \frac{\partial z'}{\partial z} \right)^{-h}
  \left( \frac{\partial \bar z'}{\partial \bar z} \right)^{-\bar h}
  O_{h,\bar h}(z,\bar z) \, .
\end{equation}
under conformal transformation of the celestial sphere base $S^2$. This is exactly how a primary operator with weights $(h,\bar{h})$ would transform. At this point we built up the associated bundle $P_{h,\bar h} := P \times_{ \rho_{(h,\bar h)}}(\mathcal{A}_{(h,\bar{h})}^{\text{CCFT}})$; the fibres of $E_{h,\bar h}$ are copies of the operator algebra and a section section of $E_{h,\bar h}$ is, in practice, a celestial operator field that assigns to each point of the celestial sphere (each direction 
$(z,\bar z)$) an operator that transforms under a conformal transformation with the appropriate weight $(h,\bar h)$. Indeed a section 
\(s : S^2 \to E_{h,\bar h}\) 
chooses, for each point \(z,\bar{z}\), an element of the fibre \(\mathcal{A}_{(h,\bar{h})}^{\text{CCFT}}\): a section is a function that assigns to every direction on the celestial sphere a celestial operator. But not just any function: it must satisfy the correct transformation law. In local coordinates, if \(U \subset S^2\) is an open set 
with a trivialization of the principal bundle, then the section is represented by a function
\begin{equation}  
s_U : U \to \mathcal{A}_{(h,\bar{h})}^{\text{CCFT}}.
\end{equation}
On the overlap \(U \cap V\), the expressions in two trivializations satisfy
\begin{equation}
s_V(z)=\rho_{(h,\bar h)}(t_{UV}(z))\, s_U(z),
\end{equation}
where \(t_{UV}(z)\in \mathrm{PGL}(2,\mathbb{C})\) is the conformal transition function. Writing the representation \(\rho_{(h,\bar h)}\) explicitly, this becomes
\begin{equation}
s_V(z)
= 
\left( \frac{\partial z'}{\partial z} \right)^{-h}
\left( \frac{\partial \bar z'}{\partial \bar z} \right)^{-\bar h}
\, s_U(z),
\end{equation}
which is the usual transformation law of primary fields. However, if we leave things as they are, we would only be able to accommodate tensor representations and not spinorial ones. To solve this problem we need a spin structure which allows us to perform a bundle lift from the $\mathrm{PGL}(2,\mathbb{C})$-principal bundle $P$ to the $\mathrm{SL}(2,\mathbb{C})$-principal bundle $P_{\mathrm{Spin}}$, i.e. the spin bundle. The spin bundle comes together with a bundle map
\begin{equation}  
\Phi : P_{\mathrm{Spin}} \rightarrow P
\end{equation}
that is fibrewise a \(2\!:\!1\) covering and is equivariant with respect to the covering group
homomorphism \(
\Phi_{\text{group}} : \mathrm{SL}(2,\mathbb{C}) \rightarrow \mathrm{PGL}(2,\mathbb{C})\), i.e.
\begin{equation}  
\Phi(u \bullet g) = \Phi(u)\bullet \Phi_{\text{group}}(g)
\qquad \forall \  u \in P_{\mathrm{Spin}},\; g \in \mathrm{SL}(2,\mathbb{C}).
\end{equation}
In local trivializations over the same open covering \(\{U_i\}\) of the celestial sphere $S^2\cong \mathbb{CP}^1$, the bundle \(P_{\mathrm{Spin}}\) is defined by
transition functions
\begin{equation}  
t^{(\text{Spin})}_{ij} : U_i \cap U_j \rightarrow \mathrm{SL}(2,\mathbb{C})
\end{equation}
which satisfy the cocycle conditions and moreover lift the original cocycle
\begin{equation} 
\Phi_{\text{group}} \circ t^{(\text{Spin})}_{ij} = t_{ij}.
\end{equation}
Thus the \(\mathrm{PGL}(2,\mathbb{C})\)-valued transformations, i.e. the conformal transformations, on the base are lifted to
\(\mathrm{SL}(2,\mathbb{C})\)-valued trasformations.
Let 
\begin{equation} 
\rho_{(h,\bar{h})} : \mathrm{SL}(2,\mathbb{C}) \to \text{Aut}(\mathcal{A}_{(h,\bar{h})}^{\text{CCFT}})
\end{equation}
be a representation (also a spinorial one), the associated bundle
$P^{(\text{Spin})}_{h,\bar h} := P_{\text{Spin}} \times_{ \rho_{(h,\bar h)}}(\mathcal{A}_{(h,\bar{h})}^{\text{CCFT}})$
has sections which are now celestial operators that can transform like a spinors. Some comments are in order
\begin{itemize}
    \item the geometry of the base manifold still uses the transition functions with values in the M\"obius group, implementing conformal transformations;
    \item the spin bundle \(P_{\mathrm{Spin}}\) and its associated bundles encode both spinorial and tensorial representations. However, only the spinorial part (those representations for which the non-trivial element of the kernel covering map $\Phi_{\text{group}}$ acts non-trivially) is genuinely new;
    \item spinor fields (sections of bundles associated to \(P_{\mathrm{Spin}}\))
    glue through the lifted transition functions and transform under the spinorial representations.
\end{itemize}
However, the existence of such a lift is topologically obstructed by some cohomology classes, hence we now briefly discuss why in our case the lift is possible. The short exact sequence \eqref{seqesat} induces a long exact sequence in cohomology and in particular induces the obstruction morphism\footnote{For the reader who is not comfortable with these topics, some useful references could be \cite{algetop,Griffiths:433962,Hatcher:478079}.} 
\begin{equation}
\delta : \check{H}^{1}\!\left(X,\operatorname{PGL}(2,\mathbb{C})\right)
\;\longrightarrow\;
H^{2}\!\left(X,\mathbb{Z}_2\right);
\end{equation}
where $\check{H}^{1}\!\left(X,\operatorname{PGL}(2,\mathbb{C})\right)$, the non-abelian Cech cohomology, classifies the possible $\mathrm{PGL}(2,\mathbb{C})$-principal bundle over $X$. The element $\delta([P])$, where $[P] \in H^{1}\!\left(X,\operatorname{PGL}(2,\mathbb{C})\right)$ is the obstruction to lift to the spin bundle. In our case the base is the celestial sphere $S^2 \cong \mathbb{CP}^1$ and we have
\begin{equation}
H^{2}\!\left(\mathbb{CP}^{1},\mathbb{Z}_2\right)\cong \mathbb{Z}_2 .
\end{equation}
Moreover, both $\operatorname{PGL}(2,\mathbb{C})$ and $\operatorname{SL}(2,\mathbb{C})$ are connected and so their principal bundles can be classified by their fundamental groups
\begin{equation}
\pi_{1}\!\left(\operatorname{PGL}(2,\mathbb{C})\right)
\cong \mathbb{Z}_2, \qquad \pi_{1}\!\left(\operatorname{SL}(2,\mathbb{C})\right)= 0.
\end{equation}
Therefore, there exist two isomorphism classes of $\operatorname{PGL}(2,\mathbb{C})$–principal bundles 
over $\mathbb{CP}^{1}$ and only one isomorphism class of $\operatorname{SL}(2,\mathbb{C})$–principal bundle.
The covering map induces
\begin{equation}
\pi_{1}\!\left(\operatorname{SL}(2,\mathbb{C})\right)
\;\longrightarrow\;
\pi_{1}\!\left(\operatorname{PGL}(2,\mathbb{C})\right) ,
\end{equation}
so the image is only the zero element. This means that a principal $\operatorname{PGL}(2,\mathbb{C})$-bundle over $\mathbb{CP}^{1}$ which is topologically trivial\footnote{Note that topologically trivial means that topologically the bundle is a global product but it can be non-trivial form the holomorphic point of view. In would be interesting to enter in the computation of the obstruction class in the case of topologically non-triviality but this will be left for more specific work.} can be lifted to an $\operatorname{SL}(2,\mathbb{C})$-bundle.

However, our principal bundle is also $\text{PGL}(2,\mathbb{C})$-equivariant and this property is inherited from associated bundles. In fact, on the associated bundles we can define the action 
\begin{equation}
    k \cdot (p,\mathcal{O}_{h,\bar{h}}) := (k * p, \sigma(k)\mathcal{O}_{h,\bar{h}}).
\end{equation}
where $k \in \text{K} \subset \text{PGL}(2,\mathbb{C})$. This action must be compatible with the structure of the associated bundle, i.e. the identification
\begin{equation}
    (p\bullet g, \mathcal{O}_{h,\bar{h}}) \sim_{\text{eq}} (p,\rho_{(h,\bar{h})}(g)\mathcal{O}_{h,\bar{h}})
\end{equation}
must translate into
\begin{equation}
    k \cdot (p\bullet g, \mathcal{O}_{h,\bar{h}}) \sim_{\text{eq}} k \cdot (p,\rho_{(h,\bar{h})}(g)\mathcal{O}_{h,\bar{h}}).
\end{equation}
This is assured if $\sigma(k)$ is a $\rho_{(h,\bar{h})}$-representation homomorphism.
The point is that we can use this $\text{K}$-equivariance to implement the duality at the level of the space of sections, i.e. the CCFT operators. For example on the associated bundle whose sections are CCFT operator currents associated to the asymptotic electromagnetic-like charges the duality would be implemented by the cyclic subgroup generated by $A$. Since $A^2 \propto I$, the cyclic subgroup is simply given by $\{I,A\} \cong \mathbb{Z}_2$. We note the condition to be an $\rho_{(h,\bar{h})}$-representation homomorphism means that the duality acts on the same conformal multiplet without change the representation $(h,\bar{h})$. This can be inconvenient, and one might want to require that duality also modifies the representation of the celestial operator. This can be achieved either by an external map between various associated bundles, or by more subtly defining the actions on the principal bundle in such a way that the action of the structure group and the equivariance group commutes up to an external automorphism $\alpha_k$ of the structure group itself. In this way, the action of the equivariance group (and hence of duality) can mix different representations of the structure group, automatically inducing a map between the various associated bundles. However, for $\text{PGL}(2,\mathbb{C})$, there are no non-trivial external automorphisms, and this rules out the second possibility. Similar consideration can be done for associated bundles of the spin bundle $P_{\text{Spin}}$ with the caveat that the $\text{PGL}(2,\mathbb{C})$-equivariance lift\footnote{This lifting uses the spin structure of the base $S^2 \cong \mathbb{CP}^1$ which exist and its unique since $\mathbb{CP}^1$ is a Riemann surface of genus $\text{g}=0$.} to a $\text{SL}(2,\mathbb{C})$-equivariance. Further investigations on this proposal should address the following issues:
\begin{enumerate}
    \item \textbf{Operator product expansions}. OPEs could be accomodate in the following geometrical picture since in a local trivialization of the principal bundle, the product of two primary fields 
$\mathcal{O}_{h_1,\bar{h}_1} \in \Gamma(P_{h_1,\bar{h}_1})$ and $\mathcal{O}_{h_1,\bar{h}_1} \in \Gamma(P_{h_2,\bar{h}_2})$ takes values in the 
tensor product bundle $P_{h_1,\bar{h}_1} \otimes P_{h_2,\bar{h}_2}$. If the representation $\rho_{(h_1,\bar{h}_1)\otimes(h_2,\bar{h}_2)}$ is reducible then the OPEs 
takes the form
\begin{equation}
\mathcal{O}_1(z_1,\bar z_1)\,\mathcal{O}_2(z_2,\bar z_2)
\;\sim\;
\sum_k C_{12}^k(z_{12},\bar z_{12})\,\mathcal{O}_k(z_2,\bar z_2) ,
\end{equation}
where the coefficients $C_{12}^k$ encode the projection of the local tensor product 
section onto the associated bundle $P_{h_k,\bar{h}_k}$.
    \item \textbf{Descendants}. Descendants arise by differentiating a primary section $O_{h,\bar h}\in\Gamma(P_{h,\bar h})$, but ordinary derivatives do not act covariantly on sections of an associated bundle. Thus one must use the covariant derivative induced by the connection on the principal $\mathrm{SL}(2,\mathbb{C})$-bundle whose non-trivial transformation could reproduce the characteristic mixing terms of descendants.  Since $(O,\nabla O,\nabla^2 O,\dots)$ cannot be considered sections of $P_{h,\bar h}$. Descendants are naturally accommodated as sections of the jet bundle. A jet bundle of finite order $J^k(P_{h,\bar h})$ contains only the data of a primary and its derivatives up to order $k$, so it can accommodate only finitely many descendants. However, the full Verma module generated by $O_{h,\bar h}$ contains all descendants, corresponding to arbitrary many derivatives.  Since no finite jet bundle can encode derivatives of all orders, the natural geometric home of all descendants is the infinite jet bundle $J^\infty(P_{h,\bar h})$, which captures the entire tower of conformal descendants in a single geometric object.
    \item \textbf{Total CCFT specturm}. The spectrum of the CCFT can be considered as the sheaf $\mathcal{A}$ such that 
\[
\mathcal A(U)
   = \widehat{\bigoplus_{(h,\bar h)}}\Gamma\!\left(U,J^\infty(P_{h,\bar h})\right) ,
\qquad U\subset S^2 .
\]
Since each infinite jet bundle is a smooth bundle, the
assignment $$U\mapsto\Gamma(U,J^\infty(P_{h,\bar h}))$$ is a sheaf of sections; moreover, direct (completed) sums of sheaves is a sheaf, hence $\mathcal A$ is effectively a sheaf.  In the limit $z_1 \rightarrow z_2$ the OPE defines a smooth local bundle morphism
\[
\mathrm{OPE}:\,
J^\infty(P_{h_1,\bar h_1})\otimes J^\infty(P_{h_2,\bar h_2})
\longrightarrow
\widehat{\bigoplus_k}J^\infty(P_{h_k,\bar h_k}),
\]
which is in turn compatible with locality; so OPE is a morphism of sheaves.  
Thus $\mathcal A$ is a sheaf of algebras where the OPE could be considered as the local
multiplication on this sheaf.
\end{enumerate}

\section{An existence and uniqueness theorem for the duality map}\label{secTHM}
We now want to consider the duality map
\begin{equation}
    Q^{(\mathrm{e})}_{q,D}\leftrightarrow Q^{(\mathrm{e})}_{p,D},
    \label{duality}
\end{equation} 
which links the electric-like charges of the two dual theories. Our goal is to prove an existence and uniqueness theorem for this duality map.

\subsection{Duality map for well-defined charges}
The existence and uniqueness of the duality map \eqref{duality} is stated by the following theorem.

\begin{Theorem}[Existence and uniqueness of the duality map for well-defined charges]\label{THM3.1} 
Let $(M_D,\boldsymbol{\eta})$ be the $D$-dimensional Minkowski spacetime where the de Rham cohomology groups in positive degree $n>0$ vanish, $H^n=0$. Let $p\in \big[1, \frac{D-2}{2}\big]$ and $q:=D-p-2$; let $H$ and $\tilde{H}$ be the field strengths of a $p$-form gauge field $B$ and of a $q$-form gauge field $\tilde{B}$, respectively, with fall-offs that ensure the property of well-defined asymptotic charges. Then there exists a duality map $f \in \mathrm{GL}(n_p,\mathbb{C})$ such that the following diagram
\begin{equation}
\centering
\begin{tikzcd}
\Omega_{\mathrm{AS}}^{p+1}(M_D) \arrow[rr, "\star"] & & \Omega_{\mathrm{AS}}^{q+1}(M_D)\\
\Omega_{\mathrm{AS}}^{p}(M_D) \arrow[u, "d_p"] \arrow[d, "\pi_1", swap ] \arrow[rr, "\star_{D-2}"] & & \Omega_{\mathrm{AS}}^{q}(M_D) \arrow[u, "d_q", swap] \arrow[d, "\pi_2"]\\
\mathbb{C}^{n_p}  \arrow[rr, "f"] & & \mathbb{C}^{n_{q}} \\
\end{tikzcd}
\end{equation}
commutes. Moreover, $f$ admits a unique restriction to a 1-dimensional subspace such that $f|_{Q}: Q^{(\mathrm{e})}_{p,D} \mapsto Q_{q,D}^{(\mathrm{e})}$ and $f^{-1}|_{Q}: Q^{(\mathrm{e})}_{q,D} \mapsto Q^{(\mathrm{e})}_{p,D}$.
\end{Theorem}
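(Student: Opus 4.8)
\medskip
\noindent\textbf{Proof plan.} The strategy is to identify the two columns of the diagram with the relevant truncations of the de~Rham complex of $M_D$ restricted to forms obeying the fall-offs of Section~\ref{ac} and the Lorenz-like gauge of Section~\ref{anasin}, and then to transport the Hodge-star isomorphisms down to the finite-dimensional spaces of ``charge aspects''. By Theorem~\ref{THMgauge} the orders of the components of $\epsilon$, of the potential $B$ and of $H=dB$ are mutually compatible, so $d$ restricts to $d_p\colon\Omega^p_{\mathrm{AS}}(M_D)\to\Omega^{p+1}_{\mathrm{AS}}(M_D)$ (and likewise $d_q$): the columns are genuine pieces of a subcomplex. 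Since $M_D\simeq\mathbb{R}^D$ is contractible, $H^k_{\mathrm{dR}}(M_D)=0$ for $k>0$, and the constraints of the theory of asymptotic symmetries ensure primitives can be chosen inside the subcomplex, so every closed form of the subcomplex is exact within it; in particular the equations of motion give $d(\star H)=0$, hence $\tilde H=\star H=d_q\tilde B$ with $\tilde B\in\Omega^q_{\mathrm{AS}}(M_D)$. This triviality-of-cohomology input is what makes the top square meaningful: the $D$-dimensional Hodge star is invertible and, by the bookkeeping \eqref{asirad}--\eqref{asiCou}, sends $(p+1)$-forms obeying the AS conditions to $(q+1)$-forms obeying them, while $\star_{D-2}$ is the fibrewise Hodge star on $S^{D-2}$, an isomorphism of angular $p$-forms onto angular $q$-forms because $q=(D-2)-p$. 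Commutativity of the top square on the charge-relevant leading orders is exactly the content of the component identities \eqref{dualpqD} (together with the $u$-angular relation displayed just after them), once the powers of $r$ and the sign $(-1)^{(p+1)(q+1)}s$ are tracked.

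\medskip
To build $f$, I would let $\pi_1,\pi_2$ send a potential to the finite-dimensional charge aspect entering \eqref{Qpq2D}, i.e.\ the leading coefficient at the order fixed by Theorem~\ref{THMgauge}; both are surjective, since that coefficient is freely prescribable. Exactness breaks the subcomplex into short exact sequences $0\to\mathrm{im}\,d_{p-1}\hookrightarrow\Omega^p_{\mathrm{AS}}(M_D)\xrightarrow{d_p}\mathrm{im}\,d_p\to0$, and the asymptotic-symmetry analysis identifies $\ker\pi_1$ with the span of the exact (pure-gauge) potentials together with those of strictly subleading fall-off; likewise for $\ker\pi_2$. A diagram chase using \eqref{dualpqD} shows that $\star_{D-2}$ carries $\ker\pi_1$ into $\ker\pi_2$ (an exact $B=d_{p-1}\epsilon$ goes to an exact $q$-form up to subleading terms, and a subleading $B$ to a subleading $q$-form), so that $f(v):=\pi_2(\star_{D-2}B)$ for any $B\in\pi_1^{-1}(v)$ is well defined; since $\star_{D-2}$ is an isomorphism it descends to $f\in\mathrm{GL}(n_p,\mathbb{C})$ with $n_q=n_p$, and the whole diagram commutes. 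That settles existence.

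\medskip
For uniqueness, I would take $Q\subset\mathbb{C}^{n_p}$ to be the one-dimensional subspace along which $\pi_1$ reproduces the physical electric-like charge $Q^{(e)}_{p,D}$ of \eqref{caricapforma}. On $Q$ any admissible completion of the diagram must coincide with the map that Hodge duality induces on the charges themselves, and the latter is rigid: combining \eqref{hodgepq} with the electric--magnetic identities \eqref{eqelecmagn} gives $Q^{(e)}_{q,D}=(-1)^{(p+1)(q+1)}s\,\tilde Q^{(m)}_{p,D}$ and $Q^{(e)}_{p,D}=\tilde Q^{(m)}_{q,D}$, so $Q^{(e)}_{p,D}\mapsto Q^{(e)}_{q,D}$ is multiplication by the single nonzero scalar read off from the matrix $A$. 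A linear map between one-dimensional spaces is pinned down by one nonzero value, so $f|_Q$ and $f^{-1}|_Q$ are unique, even though $f$ itself need not be (the diagram fixes $f$ only on $\mathrm{im}\,\pi_1$). I expect the genuine work to sit in the first two paragraphs: checking that the top square commutes at the level of the \emph{leading coefficients}, not of the full forms --- $\star$ and $d$ do not commute in general --- and that $\star_{D-2}$ respects the AS fall-offs exactly, i.e.\ that the $r$-powers in \eqref{dualpqD} land on precisely the orders predicted by Theorem~\ref{THMgauge}. That compatibility is what forces $n_p=n_q$ and makes the diagram consistent; everything else is homological bookkeeping.
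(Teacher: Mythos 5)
Your proposal is sound and rests on the same topological skeleton as the paper's proof (contractibility of $M_D$, truncation of the de Rham complex to an ``asymptotic-symmetries'' subcomplex in which $d_p$, $d_q$ become isomorphisms, transport of the Hodge star down to finite-dimensional charge spaces, then restriction to the charge line), but the mechanics differ at three points, and the comparison is instructive. First, the paper \emph{defines} $\star_{D-2}:=d^{-1}\circ\star\circ d$, so the upper square commutes tautologically and no leading-order check is ever performed; you instead take $\star_{D-2}$ to be the fibrewise sphere Hodge star and correctly identify that commutativity must then be verified on the leading coefficients through the component identities \eqref{dualpqD} and the $u$-angular relation --- genuine work the paper sidesteps by fiat, and arguably the more honest route, since it is exactly where the fall-off bookkeeping of Theorem~\ref{THMgauge} enters. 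Second, the paper makes $\pi_1,\pi_2$ \emph{bijections} by padding the charge with auxiliary components $b_k^{(B)}$ chosen so that distinct forms get distinct vectors, and sets $f=\pi_2\circ\star_{D-2}\circ\pi_1^{-1}$ directly; you take the $\pi$'s to be leading-coefficient projections (surjective, with kernel the pure-gauge plus subleading configurations) and obtain $f$ by descent after checking $\star_{D-2}(\ker\pi_1)\subseteq\ker\pi_2$. Your construction is the more canonical one and automatically gives $n_p=n_q$, whereas the paper's hinges on the rather ad hoc injective encoding; the price is that your kernel-preservation step is an additional diagram chase you have only sketched. Third, for uniqueness the paper triangularizes $f$ over $\mathbb{C}$ so that the charge component decouples and reads off $\eta=Q^{(\mathrm{e},\tilde B)}_{q,D}/Q^{(\mathrm{e},B)}_{p,D}$ (fixing $\eta=1$ only later, in the physical-interpretation subsection, via \eqref{eqelecmagn}); you argue rigidity directly from \eqref{hodgepq} and \eqref{eqelecmagn}, which is closer to the paper's physical argument than to its formal one. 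One small internal tension to fix: you state that $\pi_1$ is surjective and later that ``the diagram fixes $f$ only on $\mathrm{im}\,\pi_1$'' --- if $\pi_1$ is surjective the diagram pins down all of $f$, so you should either drop the surjectivity claim or the caveat.
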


\begin{proof}
Let us consider two copies of the de Rham complex, one labelled by $p$, $C_{\text{dR}}^{*(p)}$, and one labelled by $q=D-p-2$, $C_{\text{dR}}^{*(q)}$:
\begin{equation}
\centering
\begin{tikzpicture}
[x=0.75pt,y=0.75pt,yscale=-1,xscale=1]
%uncomment if require: \path (0,372); %set diagram left start at 0, and has height of 372

%Straight Lines [id:da5455330735384084] 
\draw    (252,175) -- (314.77,175.36) ;
\draw [shift={(316.77,175.37)}, rotate = 180.33] [color={rgb, 255:red, 0; green, 0; blue, 0 }  ][line width=0.75]    (10.93,-3.29) .. controls (6.95,-1.4) and (3.31,-0.3) .. (0,0) .. controls (3.31,0.3) and (6.95,1.4) .. (10.93,3.29)   ;
%Straight Lines [id:da006859179195403575] 
\draw    (342,175) -- (404.77,175.36) ;
\draw [shift={(406.77,175.37)}, rotate = 180.33] [color={rgb, 255:red, 0; green, 0; blue, 0 }  ][line width=0.75]    (10.93,-3.29) .. controls (6.95,-1.4) and (3.31,-0.3) .. (0,0) .. controls (3.31,0.3) and (6.95,1.4) .. (10.93,3.29)   ;
%Straight Lines [id:da5010296185698809] 
\draw    (449,176) -- (511.77,176.36) ;
\draw [shift={(513.77,176.37)}, rotate = 180.33] [color={rgb, 255:red, 0; green, 0; blue, 0 }  ][line width=0.75]    (10.93,-3.29) .. controls (6.95,-1.4) and (3.31,-0.3) .. (0,0) .. controls (3.31,0.3) and (6.95,1.4) .. (10.93,3.29)   ;
%Straight Lines [id:da9021555215486249] 
\draw    (150,175) -- (212.77,175.36) ;
\draw [shift={(214.77,175.37)}, rotate = 180.33] [color={rgb, 255:red, 0; green, 0; blue, 0 }  ][line width=0.75]    (10.93,-3.29) .. controls (6.95,-1.4) and (3.31,-0.3) .. (0,0) .. controls (3.31,0.3) and (6.95,1.4) .. (10.93,3.29)   ;
%Straight Lines [id:da15372876143963032] 
\draw    (252,115) -- (314.77,115.36) ;
\draw [shift={(316.77,115.37)}, rotate = 180.33] [color={rgb, 255:red, 0; green, 0; blue, 0 }  ][line width=0.75]    (10.93,-3.29) .. controls (6.95,-1.4) and (3.31,-0.3) .. (0,0) .. controls (3.31,0.3) and (6.95,1.4) .. (10.93,3.29)   ;
%Straight Lines [id:da9090237494480807] 
\draw    (342,115) -- (404.77,115.36) ;
\draw [shift={(406.77,115.37)}, rotate = 180.33] [color={rgb, 255:red, 0; green, 0; blue, 0 }  ][line width=0.75]    (10.93,-3.29) .. controls (6.95,-1.4) and (3.31,-0.3) .. (0,0) .. controls (3.31,0.3) and (6.95,1.4) .. (10.93,3.29)   ;
%Straight Lines [id:da12270813842907646] 
\draw    (449,116) -- (511.77,116.36) ;
\draw [shift={(513.77,116.37)}, rotate = 180.33] [color={rgb, 255:red, 0; green, 0; blue, 0 }  ][line width=0.75]    (10.93,-3.29) .. controls (6.95,-1.4) and (3.31,-0.3) .. (0,0) .. controls (3.31,0.3) and (6.95,1.4) .. (10.93,3.29)   ;
%Straight Lines [id:da23504335073392668] 
\draw    (150,115) -- (212.77,115.36) ;
\draw [shift={(214.77,115.37)}, rotate = 180.33] [color={rgb, 255:red, 0; green, 0; blue, 0 }  ][line width=0.75]    (10.93,-3.29) .. controls (6.95,-1.4) and (3.31,-0.3) .. (0,0) .. controls (3.31,0.3) and (6.95,1.4) .. (10.93,3.29)   ;
%Straight Lines [id:da6401933288934318] 
\draw    (124,125) -- (124.37,161.43) ;
\draw [shift={(124.39,163.43)}, rotate = 269.42] [color={rgb, 255:red, 0; green, 0; blue, 0 }  ][line width=0.75]    (10.93,-3.29) .. controls (6.95,-1.4) and (3.31,-0.3) .. (0,0) .. controls (3.31,0.3) and (6.95,1.4) .. (10.93,3.29)   ;
%Straight Lines [id:da8999807329522597] 
\draw    (530,125) -- (530.37,161.43) ;
\draw [shift={(530.39,163.43)}, rotate = 269.42] [color={rgb, 255:red, 0; green, 0; blue, 0 }  ][line width=0.75]    (10.93,-3.29) .. controls (6.95,-1.4) and (3.31,-0.3) .. (0,0) .. controls (3.31,0.3) and (6.95,1.4) .. (10.93,3.29)   ;
%Straight Lines [id:da11918364395796055] 
\draw    (420,125) -- (420.37,161.43) ;
\draw [shift={(420.39,163.43)}, rotate = 269.42] [color={rgb, 255:red, 0; green, 0; blue, 0 }  ][line width=0.75]    (10.93,-3.29) .. controls (6.95,-1.4) and (3.31,-0.3) .. (0,0) .. controls (3.31,0.3) and (6.95,1.4) .. (10.93,3.29)   ;
%Straight Lines [id:da5809040173254009] 
\draw    (327,125) -- (327.37,161.43) ;
\draw [shift={(327.39,163.43)}, rotate = 269.42] [color={rgb, 255:red, 0; green, 0; blue, 0 }  ][line width=0.75]    (10.93,-3.29) .. controls (6.95,-1.4) and (3.31,-0.3) .. (0,0) .. controls (3.31,0.3) and (6.95,1.4) .. (10.93,3.29)   ;
%Straight Lines [id:da8163802938107894] 
\draw    (224,125) -- (224.37,161.43) ;
\draw [shift={(224.39,163.43)}, rotate = 269.42] [color={rgb, 255:red, 0; green, 0; blue, 0 }  ][line width=0.75]    (10.93,-3.29) .. controls (6.95,-1.4) and (3.31,-0.3) .. (0,0) .. controls (3.31,0.3) and (6.95,1.4) .. (10.93,3.29)   ;

% Text Node
\draw (319,164.4) node [anchor=north west][inner sep=0.75pt]    {$\Omega^{q}$};
% Text Node
\draw (216,164.4) node [anchor=north west][inner sep=0.75pt]    {$\Omega^{q-1}$};
% Text Node
\draw (519,164.4) node [anchor=north west][inner sep=0.75pt]    {$\Omega^{q+2}$};
% Text Node
\draw (411,164.4) node [anchor=north west][inner sep=0.75pt]    {$\Omega^{q+1}$};
% Text Node
\draw (114,164.4) node [anchor=north west][inner sep=0.75pt]    {$\Omega^{q-2}$};
% Text Node
\draw (562,172.4) node [anchor=north west][inner sep=0.75pt]    {$...$};
% Text Node
\draw (97,172.4) node [anchor=north west][inner sep=0.75pt]    {$...$};
% Text Node
\draw (163,149.4) node [anchor=north west][inner sep=0.75pt]    {$d_{q-2}$};
% Text Node
\draw (466,149.4) node [anchor=north west][inner sep=0.75pt]    {$d_{q+1}$};
% Text Node
\draw (359,149.4) node [anchor=north west][inner sep=0.75pt]    {$d_{q}$};
% Text Node
\draw (267,149.4) node [anchor=north west][inner sep=0.75pt]    {$d_{q-1}$};
% Text Node
\draw (319,104.4) node [anchor=north west][inner sep=0.75pt]    {$\Omega^{p}$};
% Text Node
\draw (216,104.4) node [anchor=north west][inner sep=0.75pt]    {$\Omega^{p-1}$};
% Text Node
\draw (519,104.4) node [anchor=north west][inner sep=0.75pt]    {$\Omega^{p+2}$};
% Text Node
\draw (411,104.4) node [anchor=north west][inner sep=0.75pt]    {$\Omega^{p+1}$};
% Text Node
\draw (114,104.4) node [anchor=north west][inner sep=0.75pt]    {$\Omega^{p-2}$};
% Text Node
\draw (562,114.4) node [anchor=north west][inner sep=0.75pt]    {$...$};
% Text Node
\draw (97,114.4) node [anchor=north west][inner sep=0.75pt]    {$...$};
% Text Node
\draw (163,89.4) node [anchor=north west][inner sep=0.75pt]    {$d_{p-2}$};
% Text Node
\draw (466,89.4) node [anchor=north west][inner sep=0.75pt]    {$d_{p+1}$};
% Text Node
\draw (359,89.4) node [anchor=north west][inner sep=0.75pt]    {$d_{p}$};
% Text Node
\draw (267,89.4) node [anchor=north west][inner sep=0.75pt]    {$d_{p-1}$};
% Text Node
\draw (422,128.4) node [anchor=north west][inner sep=0.75pt]    {$\star _{D}$};
% Text Node
\draw (329,128.4) node [anchor=north west][inner sep=0.75pt]    {$\star _{D-2}$};
% Text Node
\draw (226,128.4) node [anchor=north west][inner sep=0.75pt]    {$\star _{D-4}$};
% Text Node
\draw (126,128.4) node [anchor=north west][inner sep=0.75pt]    {$\star _{D-6}$};
% Text Node
\draw (532,128.4) node [anchor=north west][inner sep=0.75pt]    {$\star _{D+2}$};
\end{tikzpicture}
\label{topocosa}
\end{equation}
where every $\star_{D+2n}$ with $n \in \mathbb{Z} \setminus \{-\infty,+\infty\}$ is required to be a group homomorphism and $\star_D$ is the Hodge operator. Noting that $q-p=D-p-2-p=D-2p-2$, we have that  
\begin{equation}
    \star^{*} : C_{\text{dR}}^{*(p)} \mapsto C_{\text{dR}}^{*(q)} \ \ \ \ [D-2p-2]
\end{equation}
is a homotopy of cochain complexes. In the critical dimension, i.e. $p=\frac{D-2}{2}=q$, the $p$-form gauge theory is self-dual and the homotopy $\star^*$ is an isomorphism of cochain complexes, since every space of forms is mapped to itself. 

Now, since we are interested in gauge field theories on Minkowski spacetime, we can assume trivial topology, i.e. all the cohomology groups ($n \neq 0$) of the de Rham complex are trivial:
\begin{equation}
    H^n=\frac{Z_n:=\{B\in \Omega^n | d_{n}B=0\}}{B_n:=\{d_{n-1}A \in \Omega^n | A \in \Omega^{n-1}\}}=0;
\end{equation}
this means that every cocycle is also a coboundary\footnote{In other words, every closed form is exact.}. Therefore, the de Rham complexes in \eqref{topocosa} are exact sequences of abelian groups. 

Now, let us restrict to only one de Rham complex\footnote{The same considerations hold for $C_{\text{dR}}^{*(q)}$.}, $C_{\textnormal{dR}}^{*(p)}$:
\begin{equation}
\centering
\tikzset{every picture/.style={line width=0.75pt}} %set default line width to 0.75pt        
\begin{tikzpicture}[x=0.75pt,y=0.75pt,yscale=-1,xscale=1]
%uncomment if require: \path (0,372); %set diagram left start at 0, and has height of 372

%Straight Lines [id:da15372876143963032] 
\draw    (252,115) -- (314.77,115.36) ;
\draw [shift={(316.77,115.37)}, rotate = 180.33] [color={rgb, 255:red, 0; green, 0; blue, 0 }  ][line width=0.75]    (10.93,-3.29) .. controls (6.95,-1.4) and (3.31,-0.3) .. (0,0) .. controls (3.31,0.3) and (6.95,1.4) .. (10.93,3.29)   ;
%Straight Lines [id:da9090237494480807] 
\draw    (342,115) -- (404.77,115.36) ;
\draw [shift={(406.77,115.37)}, rotate = 180.33] [color={rgb, 255:red, 0; green, 0; blue, 0 }  ][line width=0.75]    (10.93,-3.29) .. controls (6.95,-1.4) and (3.31,-0.3) .. (0,0) .. controls (3.31,0.3) and (6.95,1.4) .. (10.93,3.29)   ;
%Straight Lines [id:da12270813842907646] 
\draw    (449,116) -- (511.77,116.36) ;
\draw [shift={(513.77,116.37)}, rotate = 180.33] [color={rgb, 255:red, 0; green, 0; blue, 0 }  ][line width=0.75]    (10.93,-3.29) .. controls (6.95,-1.4) and (3.31,-0.3) .. (0,0) .. controls (3.31,0.3) and (6.95,1.4) .. (10.93,3.29)   ;
%Straight Lines [id:da23504335073392668] 
\draw    (150,115) -- (212.77,115.36) ;
\draw [shift={(214.77,115.37)}, rotate = 180.33] [color={rgb, 255:red, 0; green, 0; blue, 0 }  ][line width=0.75]    (10.93,-3.29) .. controls (6.95,-1.4) and (3.31,-0.3) .. (0,0) .. controls (3.31,0.3) and (6.95,1.4) .. (10.93,3.29)   ;

% Text Node
\draw (319,104.4) node [anchor=north west][inner sep=0.75pt]    {$\Omega^{p}$};
% Text Node
\draw (216,104.4) node [anchor=north west][inner sep=0.75pt]    {$\Omega^{p-1}$};
% Text Node
\draw (519,104.4) node [anchor=north west][inner sep=0.75pt]    {$\Omega^{p+2}$};
% Text Node
\draw (411,104.4) node [anchor=north west][inner sep=0.75pt]    {$\Omega^{p+1}$};
% Text Node
\draw (114,104.4) node [anchor=north west][inner sep=0.75pt]    {$\Omega^{p-2}$};
% Text Node
\draw (562,114.4) node [anchor=north west][inner sep=0.75pt]    {$...$};
% Text Node
\draw (97,114.4) node [anchor=north west][inner sep=0.75pt]    {$...$};
% Text Node
\draw (163,89.4) node [anchor=north west][inner sep=0.75pt]    {$d_{p-2}$};
% Text Node
\draw (466,89.4) node [anchor=north west][inner sep=0.75pt]    {$d_{p+1}$};
% Text Node
\draw (359,89.4) node [anchor=north west][inner sep=0.75pt]    {$d_{p}$};
% Text Node
\draw (267,89.4) node [anchor=north west][inner sep=0.75pt]    {$d_{p-1}$};
\end{tikzpicture},
\end{equation}
and let us take into account the fact that we are interested in asymptotic symmetries. We start with a $p$-form gauge theory, with gauge field $B \in \Omega^p$ and field strength $H=d_pB \in \Omega^{p+1}$, which we require to be non-vanishing\footnote{Otherwise the asymptotic charge would be zero and would be associated with a trivial gauge transformation.}. Since $C_{\mathrm{dR}}^{*(p)}$ on Minkowski spacetime is exact, we have $H=0 \Leftrightarrow B=d_{p-1}A$ for some $A \in \Omega^{p-1}$; hence we need to discard all those elements $B \in \Omega^p$ such that $B=d_{p-1}A$ for some $A \in \Omega^{p-1}$. Moreover, only the zero form can have vanishing field strength, but again, by exactness, $B=0 \Leftrightarrow A=d_{p-2}C$ for some $C \in \Omega^{p-2}$. 

Therefore, for the purposes of asymptotic symmetries we can replace $\Omega^{p+1}$ with $\Omega^{p+1}_{\text{AS}}:=\{H \in \Omega^{p+1} | H=d_pB, H \neq 0\} \cup \{H=0\}$, while $\Omega^{p}$ remains the same but we denote it as $\Omega^{p}_{\text{AS}}$, and we replace $\Omega^{p-1}$ and $\Omega^{p+1}$ with $0$, thereby obtaining the asymptotic symmetries de Rham complex $C_{\text{ASdR}}^{*(p)}$:
\begin{equation}
\centering
\begin{tikzpicture}[x=0.75pt,y=0.75pt,yscale=-1,xscale=1]
%uncomment if require: \path (0,372); %set diagram left start at 0, and has height of 372

%Straight Lines [id:da15372876143963032] 
\draw    (233,115) -- (295.77,115.36) ;
\draw [shift={(297.77,115.37)}, rotate = 180.33] [color={rgb, 255:red, 0; green, 0; blue, 0 }  ][line width=0.75]    (10.93,-3.29) .. controls (6.95,-1.4) and (3.31,-0.3) .. (0,0) .. controls (3.31,0.3) and (6.95,1.4) .. (10.93,3.29)   ;
%Straight Lines [id:da9090237494480807] 
\draw    (332,115) -- (394.77,115.36) ;
\draw [shift={(396.77,115.37)}, rotate = 180.33] [color={rgb, 255:red, 0; green, 0; blue, 0 }  ][line width=0.75]    (10.93,-3.29) .. controls (6.95,-1.4) and (3.31,-0.3) .. (0,0) .. controls (3.31,0.3) and (6.95,1.4) .. (10.93,3.29)   ;
%Straight Lines [id:da12270813842907646] 
\draw    (439,116) -- (501.77,116.36) ;
\draw [shift={(503.77,116.37)}, rotate = 180.33] [color={rgb, 255:red, 0; green, 0; blue, 0 }  ][line width=0.75]    (10.93,-3.29) .. controls (6.95,-1.4) and (3.31,-0.3) .. (0,0) .. controls (3.31,0.3) and (6.95,1.4) .. (10.93,3.29)   ;

% Text Node
\draw (300,104.4) node [anchor=north west][inner sep=0.75pt]    {$\Omega_{\text{AS}}^{p}$};
% Text Node
\draw (216,108.4) node [anchor=north west][inner sep=0.75pt]    {$0$};
% Text Node
\draw (509,108.4) node [anchor=north west][inner sep=0.75pt]    {$0$};
% Text Node
\draw (401,104.4) node [anchor=north west][inner sep=0.75pt]    {$\Omega_{\text{AS}}^{p+1}$};
% Text Node
\draw (456,89.4) node [anchor=north west][inner sep=0.75pt]    {$d_{p+1}$};
% Text Node
\draw (349,89.4) node [anchor=north west][inner sep=0.75pt]    {$d_{p}$};
% Text Node
\draw (248,89.4) node [anchor=north west][inner sep=0.75pt]    {$d_{p-1}$};
\end{tikzpicture}
\end{equation} 
which is a short exact sequence. By general arguments or by explicit computation it follows that $d_p$ is an isomorphism. Similarly, $d_q$ is an isomorphism. We emphasize that $d_p^{-1}$ and $d_q^{-1}$ are defined only on the image, respectively, of $d_p$ and $d_q$, and that their definition relies on the trivial de Rham cohomology of Minkowski spacetime.

Let us now construct the maps $\pi$. Let us denote by $Q^{(\mathrm{e},B)}_{p,D}$ and $Q^{(\mathrm{e},\tilde{B})}_{q,D}$ the electric-like asymptotic charges associated with forms $B \in \Omega_{\mathrm{AS}}^{p}(M_D)$ and $\tilde{B} \in \Omega_{\mathrm{AS}}^{q}(M_D)$, respectively. The idea is that $\pi_1$ associates to each form $B \in \Omega_{\mathrm{AS}}^{p}(M_D)$ with charge $Q^{(\mathrm{e},B)}_{p,D}$, given by \eqref{chargeprad}, a vector $v \in \mathbb{C}^{n_p}$ defined by
\begin{equation}
    v:=\begin{cases}
        \pi_1(B)=0 \ \ \ \textnormal{if} \  B\equiv 0;\\
        \pi_1(B)=Q^{(\mathrm{e},B)}_{p,D}e_1+\sum_{k=2}^{n_p} b^{(B)}_ke_k \ \ \ \textnormal{otherwise},
    \end{cases}
\end{equation}
where $e_1,...,e_{n_p}$ is an orthonormal basis of $\mathbb{C}^{n_p}$ and $b^{(B)}_k\in \mathbb{R}$ are some of the independent entries of the coordinate representation of the form, chosen in such a way that different forms correspond to different string objects. 
\begin{comment}
For example we can construct the $b^{(B)}_k$ as
\begin{equation}
    b^{(B)}_k=\frac{1}{2^k}\sum_{i}B_i
\end{equation}
where the $B_i$ are the positive entries of the matrix representing the form. Obviusly we have
\begin{equation}
\begin{aligned}
    &b_k^{(\lambda B)}=\frac{1}{2^k}\sum_i \lambda B_i=\lambda b_k^{(B)},\\
    &b_k^{(B+C)}=\frac{1}{2^k}\sum_i(B_i+C_i)=\frac{1}{2^k}\sum_i(B_i)+\frac{1}{2^k}\sum_i(C_i)=b_k^{(B)}+b_k^{(C)}.
\end{aligned}
\end{equation}
\end{comment}
Moreover, for all $B,C\in \Omega_{\mathrm{AS}}^{p}(M_D)$ we have
\begin{equation}
\begin{aligned}
    &\pi_1(B+C)=Q^{(\mathrm{e},B+C)}_{p,D}e_1+\sum_{k=2}^{n_p}(b^{(B)}_k+b^{(C)}_k)e_k,\\
    &\pi_1(\lambda B)=Q^{(\mathrm{e},\lambda B)}_{p,D}e_1+\sum_{k=2}^{n_p}(\lambda b^{(B)}_k)e_k, \ \ \lambda \in \mathbb{C},
\end{aligned}
\end{equation}
hence the map is linear. Furthermore, it is injective and surjective by construction, therefore it is bijective and thus invertible. The same considerations hold for the map $\pi_2$, which is defined similarly to $\pi_1$, and therefore also for $\pi_2|_{\mathbb{C}^{n_p}}$, where $\mathbb{C}^{n_p} \subset \mathbb{C}^{n_{q}}$ is a subspace whose first component is given by $Q^{(\mathrm{e},\Tilde{B})}_{q,D}$ for some $\tilde{B} \in \Omega_{\mathrm{AS}}^{q}(M_D)$. Since $\pi_2|_{\mathbb{C}^{n_p}}$ is a bijection, the subspace $\mathbb{C}^{n_p} \subset \mathbb{C}^{n_{q}}$ is mapped by $\pi_2^{-1}|_{\mathbb{C}^{n_p}}$ into a subspace of $\Omega_{\mathrm{AS}}^{q}(M_D)$ of dimension $n_p$; the operator $\star_{D-2}$ maps $\Omega_{\mathrm{AS}}^{p}(M_D)$ into this subspace and can be defined as
\begin{equation}
  \star_{D-2} := d^{-1} \circ \star \circ d.
\end{equation} 
We know that $d$ is a linear bijection because it is injective (since we have removed forms with vanishing exterior derivative, i.e. pure gauge configurations for which the field strength vanishes) and surjective (since we have removed field configurations that differ by pure gauge, thanks to the quotient in the definition of asymptotic symmetries). Moreover, $\star$ is an isomorphism by definition, and therefore $\star_{D-2}:=d^{-1} \circ \star \circ d$ is a linear bijection from a subspace of $\Omega_{\mathrm{AS}}^{p}(M_D)$ to a subspace of $\Omega_{\mathrm{AS}}^{q}(M_D)$. 

In this setup, the existence and uniqueness of the duality map $f$ follows from the good behaviour of $\star_{D-2}$, $\pi_1$ and $\pi_2$. Indeed, since the $\pi$ maps are bijective, they are invertible and the duality map can be defined as $f:=\pi_2 \circ \star_{D-2} \circ \pi_1^{-1}$. The map $f$ itself is a bijection, since it is a composition of bijections. The duality map $f$ is therefore realized as an automorphism of $\mathbb{C}^{n_p}$, so $f \in \mathrm{Aut}(\mathbb{C}^{n_p})=\mathrm{GL}(\mathbb{C}^{n_p}) \cong \mathrm{GL}(n_p,\mathbb{C})$. 

Let us write a generic transformation $f$ in $\mathrm{GL}(n_p,\mathbb{C})$ as
\begin{equation}
f=\begin{bmatrix}
\eta & \Vec{\alpha} \\
\Vec{\beta}^{\mathrm{t}} & A  \\
\end{bmatrix};%\begin{bmatrix}
%\eta & \Vec{0} \\
%\Vec{0}^{\mathrm{t}} & A  \\
%\end{bmatrix}
\end{equation}
with $\eta \in \mathbb{C} \setminus \{0\}$, $A \in \mathrm{GL}(n_p-1,\mathbb{C})$ and $\Vec{\alpha},\Vec{\beta} \in \mathbb{C}^{n_p-1}$. Applying this transformation to the vector $v=\pi_1(B)=Q^{(\mathrm{e},B)}_{p,D}e_1+\sum_{k=2}^nb^{(B)}_ke_k=(Q^{(\mathrm{e},B)}_{p,D},\Vec{b}^{(B)})$, we obtain a vector 
\begin{equation}
\Tilde{v}=\pi_2(\Tilde{B})=Q^{(\mathrm{e},\Tilde{B})}_{q,D}e_1+\sum_{k=2}^n\Tilde{b}^{(\Tilde{B})}_ke_k=(\Tilde{Q}^{(\mathrm{e},\Tilde{B})}_{q,D}, \Vec{\tilde{b}}^{(\Tilde{B})}),
\end{equation}
with $\Tilde{b}^{(\Tilde{B})}_k \in \mathbb{R}$ and $\Tilde{B}=\star_{D-2}B \in \Omega_{\mathrm{AS}}^{q}(M_D)$. 

Therefore
\begin{equation}
\begin{aligned}
    &\eta Q^{(\mathrm{e},B)}_{p,D}+\Vec{b}^{(B)}\cdot \Vec{\alpha}=Q^{(\mathrm{e},\Tilde{B})}_{q,D},\\
    &\Vec{\beta}Q^{(\mathrm{e},B)}_{p,D}+A \cdot \Vec{b}^{(B)}=\Vec{\tilde{b}}^{(\Tilde{B})}.
\label{eqmapping}    
\end{aligned}
\end{equation}
Since an invertible matrix is triangularizable if and only if its characteristic polynomial has all roots in the underlying field, and $\mathbb{C}$ is algebraically closed, we can always choose, by a change of basis, $f$ such that $\Vec{\alpha}=0$ in \eqref{eqmapping}. 
\begin{comment}
We now show that every $f$ can be, upon a change of basis, taken to be down triangular; so $\Vec{\alpha}=0$. Indeed an invertible matrix is triangularizable if and only if its characteristic polynomial admits all roots in the underlying field, in our case $\mathbb{R}$. Let us show that $f$ is so by induction on the dimension $n_p$. For $n_p=1$ we have $f=\eta$ and from \ref{eqmapping} we get the real parameter $\eta$ given by the quotient of the charges that is a real quantity; hence $f$ has real characteristic polynomial roots. By inductive step we assume that also in the case $n_p-1$, $f$ has only real characteristic polynomial roots. For $n_p$, the $n_p \times n_p$ matrix $f$ has a submatrix given by the first $n_p-1$ rows and the first $n_p-1$ columns which has only real characteristic polynomial roots by the inductive step. For the complex conjugate root theorem, according to which if $P$ is a polynomial in one variable with real coefficients and $z$ is a roots than also $\Bar{z}$ is a root, we conclude that also the single remaining root of the characteristic polynomial must be real. 
\end{comment}
The complex parameter $\eta$, which depends on the dimension $D$ and on the degrees of the forms involved, is simply given by 
\begin{equation}
\eta=\frac{Q^{(\mathrm{e},\tilde{B})}_{q,D}}{Q^{(\mathrm{e},B)}_{p,D}};
    \label{eta}
\end{equation}
equivalently,
\begin{equation}
Q^{(\mathrm{e},\tilde{B})}_{q,D}=f|_Q(Q^{(\mathrm{e},B)}_{p,D})
\end{equation}
where $f|_Q(\bullet)=\eta \bullet$.\\
Explicitly, using \eqref{chargeprad} and \eqref{eta}, we have
\begin{equation}
\begin{aligned}
    &\Lambda_q\oint_{S_u^{D-2}}\gamma^{i_1j_1}...\gamma^{i_{q-1}j_{q-1}}    \tilde{\epsilon}_{i_1...i_{q-1}}^{(\frac{D-(2q+2)}{2})} \tilde{\mathcal{R}}^{(q)}d\Omega=\Lambda_p\oint_{S_u^{D-2}}\gamma^{i_1j_1}...\gamma^{i_{p-1}j_{p-1}} \epsilon_{i_1...i_{p-1}}^{(\frac{D-(2p+2)}{2})}\mathcal{R}^{(p)}d\Omega;
\end{aligned}    
\end{equation}
bringing everything to one side we obtain a vanishing charge, but it cannot be associated with an identically vanishing form, since for $p\neq q$ we cannot isolate the difference $ \tilde{\mathcal{R}}^{(q)}-{\mathcal{R}}^{(p)}$. Therefore we must have, explicitly,
\begin{equation}
\begin{aligned}
    &\Lambda_{q} \gamma^{j_1\Tilde{j}_1}...\gamma^{j_{q-1}\Tilde{j}_{q-1}} \tilde{\epsilon}_{j_1...j_{q-1}}^{(\frac{D-(2q+2)}{2})} \tilde{\mathcal{R}}^{(q)}=\Lambda_{p}\gamma^{i_1j_1}...\gamma^{i_{p-1}j_{p-1}} \epsilon_{i_1...i_{p-1}}^{(\frac{D-(2p+2)}{2})} {\mathcal{R}}^{(p)}.
    \label{mappa1}
\end{aligned}
\end{equation}

If $p= q$, i.e. $p=\frac{D-2}{2}$, the charges involve forms of the same degree and we can isolate their difference, which must be the identically vanishing form. We obtain 
\begin{equation}
    \tilde{\mathcal{R}}^{(\frac{D-2}{2})}- {\mathcal{R}}^{(\frac{D-2}{2})}=0;
    \label{mappa2}
\end{equation}
which means that the charge is mapped to a multiple of itself, and hence the form is self-dual. 
\end{proof}

Therefore, the duality map is topological in nature and can be constructed if and only if 
\begin{equation}
    H^p=H^{p+1}=0=H^{q+1}=H^{q}.
    \label{cohom}
\end{equation}
Indeed, the vanishing of these cohomology groups is sufficient to reduce the full de Rham complex to the asymptotic symmetries de Rham complex, and it is also necessary, since to construct the duality map we need $d_p$ and $d_q$ to be isomorphisms. Again, we emphasize that $d_p^{-1}$ and $d_q^{-1}$ are defined only on the image, respectively, of $d_p$ and $d_q$, and that their definition relies on the vanishing of the de Rham cohomology groups \eqref{cohom}. We observe that similar steps can be retraced in the case of Coulomb fall-off and $D=D_c=2p+2$, and hence also in this case the theorem above holds.

Some comments are in order. First of all, let us discuss what happens if the assumptions of the theorem are weakened. The requirement that the fall-offs produce well-defined charges is essential in order to exclude pathological behaviours in the charges that would make it impossible to relate the charges in the two dual descriptions via a linear map (at least in the sense of complex vector spaces).\\ The hypothesis of trivial topology (or at least the requirement \eqref{cohom}) is the most delicate one, since it ensures exactness and thus the existence of $d_p^{-1}$ and $d_q^{-1}$. When the spacetime has nontrivial topology, such as in the presence of wormholes, handles or defects, the operators $d_p$ and $d_q$ are no longer invertible and $d_p^{-1}$ and $d_q^{-1}$ cease to exist. One reason is, for example, that if $H^{p} \neq 0$, then, by the Hodge theorem, non-vanishing harmonic forms appear in $\Omega^{p}$. Since a harmonic form is closed and corresponds to a unique cohomology class in $H^{p}$, it cannot arise from a $(p-1)$-form unless the harmonic form vanishes. This means that it is not possible to reduce the de Rham complex to its asymptotic symmetries version; hence $d_p$ cannot be an isomorphism and it cannot be inverted. Moreover, we underline that the argument presented above requires only the existence of $d_p^{-1}$ and $d_q^{-1}$; it is therefore not necessary to choose any specific homotopy operator.\\
Finally, let us briefly comment on how the de Rham complexes on $M_D$ relate to complexes on constant-$u$ cuts $S_u^{D-2}$. The geometrical setting is as follows. The constant-$u$ cuts $S_u^{D-2}$ define submanifolds of the conformally compactified spacetime; therefore we have a canonical inclusion map
\begin{equation}
i_u : S_u^{D-2} \hookrightarrow M_D .
\end{equation}
This allows us to restrict bulk differential forms to the boundary via the pullbacks
\begin{equation}
i_u^*:\Omega^p(M_D)\to\Omega^p(S_u^{D-2});
\end{equation}
geometrically, $i_u^*\omega$ is obtained by keeping only the components of $\omega$ tangent to the sphere and evaluating them in the limit $r\to+\infty$ at fixed $u$. However, there is a caveat: the pullback kills all components with $r$ or $u$ indices, since $i_u^*(dr)=i_u^*(du)=0$. Therefore, in such cases, the correct procedure is to first contract the form with the vector fields $\partial_r$ and/or $\partial_u$ and then apply the pullback via the inclusion map:
\begin{equation}
\begin{aligned}
&i_u^*\circ \iota_{\partial_r}:\Omega^p(M_D)\to\Omega^{p-1}(S_u^{D-2});\\
&i_u^*\circ \iota_{\partial_u}:\Omega^p(M_D)\to\Omega^{p-1}(S_u^{D-2});\\
&i_u^*\circ  \iota_{\partial_u} \circ \iota_{\partial_r}:\Omega^p(M_D)\to\Omega^{p-2}(S_u^{D-2});\\
\end{aligned}
\end{equation}
where $\iota_{X}$ denotes contraction with the vector field $X$. For example, the components $H_{rui_{1}...i_{p-1}}$ are the components of a $(p-1)$-form on $S_u^{D-2}$. Hence there is no obvious relation between the de Rham complex of $M_D$ and that of $S_u^{D-2}$. However, it would be possible to construct an ad hoc de Rham complex that accommodates this structure, keeping track of the normal directions to the sphere, and which could allow us to understand the map $\star_{D-2}$ directly and intrinsically in terms of boundary data, that is, on the differential forms of this extended de Rham complex over $S_u^{D-2}$. The explicit construction of this complex and the reinterpretation of the map $\star_{D-2}$ solely form boundary data can be material for another work, with the scope to better understand the nature of the duality.

\subsubsection{Physical interpretation}\label{phyint}
Let us no discuss the result of theorem \eqref{THM3.1} from a physical point of view. First of all, when the spacetime satisfies the conditions \eqref{cohom}, and hence we can write $Q_{q,D}^{(\mathrm{e})}=\eta Q_{p,D}^{(\mathrm{e})}$, magnetic-like and electric-like charges act equivalently on physical states. Indeed, to promote charges to operators acting on the Hilbert space $\mathcal{H}$ we use continuous functional calculus and, in the end, the electric-like and magnetic-like charge operators act differently on a vector in the Hilbert space only by an overall multiplicative factor. However, they act in the same way on the projective Hilbert space, i.e. the state space or ray space, since it is defined by quotienting by the equivalence relation $|\psi\rangle \sim_{\text{eq}} \lambda |\psi\rangle$ with $|\psi\rangle \in \mathcal{H}, \ \lambda \in \mathbb{C}$. This is consistent with the fact that the two theories describe the same degrees of freedom. 

Indeed, we can construct electric-like and magnetic-like Hilbert vectors using, respectively, the $p$-form gauge field operator $B^{(p)} \in \Omega^p$ and its dual $\tilde{B}^{(q)} \in \Omega^q$:
\begin{equation}
    |\mathrm{e}\rangle_{B^{(p)}}, \ \ \ \ |\mathrm{m}\rangle_{\tilde{B}^{(q)}};
\end{equation}
but we can also construct another pair of electric-like and magnetic-like Hilbert vectors using, respectively, the dual $q$-form gauge field operator $B^{(q)} \in \Omega^q$ and its dual $\tilde{B}^{(p)} \in \Omega^p$: 
\begin{equation}
    |\mathrm{e}\rangle_{B^{(q)}}, \ \ \ \ |\mathrm{m}\rangle_{\tilde{B}^{(p)}}.
\end{equation}
However, by definition, the dual field $\tilde{B}^{(q)}$ coincides with the field $B^{(q)}$, hence
\begin{equation}
|\mathrm{e}\rangle_{B^{(q)}}=|\mathrm{m}\rangle_{\tilde{B}^{(q)}}
\end{equation}
and, by the same reasoning (up to the scalar factors needed to invert the Hodge star operator), we have 
\begin{equation}
|\mathrm{e}\rangle_{B^{(p)}}=(-1)^{(p+1)(q+1)}s|\mathrm{m}\rangle_{\tilde{B}^{(p)}}.
\end{equation}
Moreover, since the duality\footnote{Recall that our notation is such that $dB^{(q)}=\star dB^{(p)}$ and $dB^{(p)}=(-1)^{(p+1)(q+1)}s\star dB^{(q)}$.} relates $B^{(p)}$ with $\tilde{B}^{(q)}$ and $B^{(q)}$ with $\tilde{B}^{(p)}$, we obtain 
\begin{equation}
|\mathrm{e}\rangle_{B^{(p)}}=(-1)^{(p+1)(q+1)}s \alpha |\mathrm{m}\rangle_{\tilde{B}^{(q)}}=(-1)^{(p+1)(q+1)}s \alpha|\mathrm{e}\rangle_{B^{(q)}}
\end{equation}
and 
\begin{equation}
|\mathrm{e}\rangle_{B^{(q)}}=(-1)^{(p+1)(q+1)}s \beta |\mathrm{m}\rangle_{\tilde{B}^{(p)}}=\beta|\mathrm{e}\rangle_{B^{(p)}}
\end{equation}
where $\alpha \in \mathbb{C}$ and, by consistency,
\begin{equation}
    \beta^{-1}=(-1)^{(p+1)(q+1)}s \alpha.
\end{equation}
Therefore, putting all the pieces together, we find
\begin{equation}
Q_{p,D}^{(\mathrm{e})}|\mathrm{e}\rangle_{B^{(p)}}=Q_{p,D}^{(\mathrm{e})}(-1)^{(p+1)(q+1)}s \alpha|\mathrm{e}\rangle_{B^{(q)}},
\end{equation}
but the appropriate charge that should act on the vector $|\mathrm{e}\rangle_{B^{(q)}}$ is $Q_{q,D}^{(\mathrm{e})}$, hence
\begin{equation}
    Q_{q,D}^{(\mathrm{e})}=Q_{p,D}^{(\mathrm{e})}(-1)^{(p+1)(q+1)}s \alpha \Rightarrow Q_{q,D}^{(\mathrm{e})}=\eta Q_{p,D}^{(\mathrm{e})}
\end{equation}
with $\eta:=(-1)^{(p+1)(q+1)}s \alpha \in \mathbb{C}.$ However, we know that the electric-like and magnetic-like charges of the two dual descriptions are related by \eqref{eqelecmagn} and, in order to be consistent with that relation, we must fix $\eta=1.$

\subsection{Comments on the duality for power law divergent/vanishing charges}\label{dualmap2}
In this brief section we present some results for asymptotic charges with Coulomb fall-off \eqref{chargepcoul}. In this section we treat logarithmic terms as pure gauge, and therefore we do not include them in the charges.

First of all, let us discuss the magnetic-like charge in this case. For the charges \eqref{chargepcoul} involving field components with Coulomb fall-offs one can
formally reproduce the same steps of the radiation fall-offs case, however, we have to carefully take into account the different orders entering the charges. The result is a magnetic-like which is divergent in $D > 2p+ 2 $ and vanishing in $D < 2p+ 2$, which is
the opposite behavior with respect to the charges \eqref{chargepcoul}: when the Coulombic
electric-like charge \eqref{chargepcoul} vanishes the corresponding magnetic-like charge diverges, and viceversa. Therefore a charge that in a given theory is subleading with respect to the charge with radiation fall-offs is mapped to an overleading charge in the dual
description. One simple instance is the dual pair two-form and scalar field in
$D=4$. In that case the overleading $\mathcal{O}(r^2)$ 2-form charge is mapped to a subleading scalar charge of order $\mathcal{O}(r^{-2})$. Scalar charges acting at $\mathcal{O}(r^{-2})$ appear, for example, in \cite{Hamada_2017} in connection with the pion memory effect.

As for the duality map, we present some speculative results, specifically Conjecture \ref{Conj3.1}. The main difference with respect to the case of Theorem \ref{THM3.1} is that we need to add the point at infinity to $\mathbb{C}^n$; we denote by $\overline{\mathbb{C}^n}=\mathbb{C}^n \cup \{\infty\}$ the projective (or Aleksandrov) compactification of $\mathbb{C}^n$, also known as the Riemann sphere. Let us focus on the case $D<2p+2$, since for $D>2p+2$ we have the opposite picture. The duality map relates the asymptotic charge of a $p$-form with that of a $(q=D-p-2)$-form; this means that when the asymptotic charge of the $p$-form is power-law divergent, that of the dual $q$-form is power-law vanishing, and viceversa\footnote{Indeed the asymptotic charge of a $(q=D-p-2)$-form scales as $\mathcal{O}(r^{D-2p-2})$, while that of a $p$-form scales as $\mathcal{O}(r^{-(D-2p-2)})$.}. Hence the duality maps a vanishing charge into a divergent one. \\
To have a more precise picture, we first associate the infinite charge with the point at infinity and, second, the vanishing charge with the origin. The duality map can then be identified with the reciprocal map, $R(v_1,...,v_n)=(\frac{1}{v_1},...,\frac{1}{v_n})$, which is a total function on the projectively extended space\footnote{This structure allows for division by zero and the presence of the point at infinity.}. Therefore every component of a vector in one copy of $\overline{\mathbb{C}^n}$ is mapped to its inverse in the other copy of $\overline{\mathbb{C}^n}$. If we partially follow the idea in the proof of Theorem \ref{THM3.1}, we can restrict the duality map to its first component in order to obtain the duality map between the charges:
\begin{equation}
    Q^{(\mathrm{e},\Tilde{B})}_{q,D}=\frac{1}{Q^{(\mathrm{e},B)}_{p,D}}.
    \label{mapcol}
\end{equation}
This duality map is a special case of a Möbius transformation\footnote{This is consistent with the fact that now, by a modification of Theorem \ref{THM3.1}, we have $f \in \mathrm{Aut}(\overline{\mathbb{C}^n})$, and Möbius transformations are precisely the automorphisms of the Riemann sphere.}, but the particularly interesting point is that a power-law vanishing charge, i.e. a gauge theory with trivial gauge transformations at null infinity, is mapped to a power-law divergent charge, i.e. a gauge theory with boundary conditions that are too weak in the sense of the following definition.

\begin{Definition}[Power law weak fall-offs]
Let $(J, M, G)$ be an abelian gauge theory with $(M,\boldsymbol{g})$ a $D$-dimensional Lorentzian manifold with conformal boundary $\partial M_c$, $J$ a set of fields and $G$ an abelian group. The set of boundary conditions $J|_{\partial M_c}$ is said to be power law weak if the asymptotic charge computed on $\partial M_c$ with field fall-off given by $J|_{\partial M_c}$ is power-law divergent. 
\end{Definition}

Therefore, a $p$-form gauge theory with Coulomb fall-offs has trivial gauge transformations at null infinity if and only if the dual $q$-form gauge theory has power law weak boundary conditions. We can attempt to generalize this observation to generic abelian gauge theories and their duals with the following conjecture.

\begin{Conjecture}[Link between trivial gauge transformations and power law weak fall-offs]\label{Conj3.1} 
    Let $(M,\boldsymbol{g})$ be a $D$-dimensional Lorentzian manifold with conformal boundary $\partial M_c$ and let $(J, M, G)$ be an abelian gauge theory with gauge group $G$ and fields boundary conditions $J|_{\partial M_c}$. The dual gauge theory $(\Tilde{J}, M, G)$ has trivial gauge transformations at the conformal boundary $\partial M_c$ if and only if  $J|_{\partial M_c}$ is power law weak.
\end{Conjecture}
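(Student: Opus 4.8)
Although this is stated only as a conjecture, here is the strategy by which I expect it can be established, under the hypotheses that make Theorem \ref{THM3.1} work. The first thing to fix is the precise meaning of the two notions in the statement. I would declare that ``$(\tilde{J},M,G)$ has trivial gauge transformations at $\partial M_c$'' means that every candidate asymptotic gauge parameter of the dual theory produces a \emph{power law vanishing} asymptotic charge in the sense of Definition \ref{defvan}, so that no residual ``large'' gauge symmetry survives the fall-offs; and that ``$J|_{\partial M_c}$ are power law weak'' means, as in the definition of power law weak fall-offs stated above, that the asymptotic charge of $(J,M,G)$ is \emph{power law divergent}. Both readings are exactly the dictionary already used in Section \ref{dualmap2} for the $p$-form/$q$-form pair, and both presuppose that the covariant phase space construction of the Noether two-form goes through for a generic abelian gauge theory, which I would assume.

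Next I would reproduce the algebraic-topology set-up of Theorem \ref{THM3.1} for the pair $(J,M,G)$, $(\tilde{J},M,G)$: the gauge-invariant data organise into a cochain complex of differential forms on $M$ with exterior-derivative differential, the field strength is a closed form, and the dual theory is obtained by Hodge-dualising it. Assuming the relevant cohomology groups of $M$ vanish --- the topological hypothesis \eqref{cohom}, automatic on Minkowski space --- the complex collapses to a short exact sequence, the relevant differentials become isomorphisms, and the duality map exists; projected onto the charge coordinate and extended to the projective compactification $\overline{\mathbb{C}^n}$ it is a Moebius transformation. Because the two electric-like charges scale with opposite radial powers --- $\mathcal{O}(r^{-(D-2p-2)})$ versus $\mathcal{O}(r^{+(D-2p-2)})$ in the $p$-form model --- this Moebius map degenerates, on the one-dimensional charge subspace, to the reciprocal map $Q^{(\mathrm{e},\tilde{B})}=1/Q^{(\mathrm{e},B)}$ up to a nonzero constant fixed by \eqref{eqelecmagn}.

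The conclusion then follows because the reciprocal map on the Riemann sphere interchanges $0$ and $\infty$: a power law vanishing dual charge (the origin) corresponds precisely to a power law divergent original charge (the point at infinity), and conversely, which is the claimed ``if and only if''. As a consistency check, the borderline ``well defined charge'' regime of Theorem \ref{THM3.1} --- for instance $D=2p+2$ for $p$-forms --- fits the statement, since there the charge is finite and non-zero in both descriptions and neither side of the equivalence holds.

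The main obstacle, and the reason this remains a conjecture, is justifying the step above for an \emph{arbitrary} abelian gauge theory rather than just for $p$-forms. For $p$-forms the reciprocal structure is forced by the Hodge relation $\tilde{H}=\star H$ together with the fall-off bookkeeping of Theorem \ref{THMgauge}; for mixed-symmetry (``exotic'') gauge fields, or for theories with several interacting sectors, one must first establish that a genuine dual formulation exists, that its asymptotic charges are computed by the same recipe, and that the radial scalings remain power-law and still opposite --- none of which is automatic. A secondary difficulty is proving, for a generic theory, the equivalence ``trivial residual gauge symmetry $\Leftrightarrow$ vanishing asymptotic charge'', i.e. that the only obstruction to triviality of the boundary gauge group is the non-vanishing of the charge, with no field-dependent phase space subtleties.
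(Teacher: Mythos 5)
The paper does not actually prove this statement---it is put forward only as a conjecture, motivated by the heuristic discussion of Section \ref{dualmap2}---and your strategy reproduces essentially that same motivating argument: the dictionary identifying trivial boundary gauge transformations of the dual theory with power law vanishing charges and power law weak fall-offs with power law divergent charges, the compactification $\overline{\mathbb{C}^n}$, and the reciprocal (Moebius-type) map exchanging $0$ and $\infty$ under the duality. You also correctly pinpoint why it stays a conjecture (extending the $p$-form reciprocal structure and the charge/triviality dictionary to an arbitrary abelian gauge theory), so your proposal is faithful to the paper's own reasoning rather than a different route.
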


Let us return to the issue of renormalizing power-law divergent charges. The possibility of renormalizing such charges in one description means, in view of the above conjecture, that we have to eliminate the trivial gauge transformations in the dual description. This is meaningful, since renormalizing a divergent charge amounts to removing divergences that are not physical, and this can be achieved by removing the unphysical (i.e. redundant) gauge transformations from the dual description. 

\section{Asymptotic charges and $p$-form symmetries}\label{sec4}

In this section we propose a novel framework connecting $p$-form gauge theories on Monkowski spacetime with their asymptotic charges and dualities. The central idea is to interpret asymptotic charges $Q^{(\mathrm{e})}_{p,D}$ and their magnetic-like duals as integrated operators on the celestial sphere, generating higher-form symmetries in the celestial CFT. This perspective may lead to the existence of new celestial current algebras and cohomological structures.

Let us consider a $p$-form gauge field $B$, its field strength $H=dB$, and the dual gauge theory with a $q$-form field $\tilde{B}$ and field strength $\star H=:\tilde{H}=d\tilde{B}$, where $q=D-p-2$. The asymptotic gauge parameters of the two formulations are, respectively, $\epsilon$ and $\tilde{\epsilon}$, with the conditions $\partial_u \epsilon=\partial_u \tilde{\epsilon}=0$ at leading order in an $r$-power-law expansion.\\
We can construct the following object:
\begin{equation}
   Q^{(p)}(\Sigma_u^{p+1}):= \int_{\Sigma_u^{p+1}} \star H
   \label{hf}
\end{equation}
where $\Sigma_u^{p+1} \subset S_u^{D-2}$. This charge is conserved since the local $(q+1)$-current $j^{(q+1)}:= \star H$ is closed as a consequence of the relation $d\star H=0$. This object can be regarded as the conserved charge associated with a codimension $D-p-1=q+1$ submanifold; hence it is a conserved charge associated with a $p$-form symmetry, following the terminology of \cite{Gaiotto:2014kfa}. 

At this point we can take the local current $j^{(q+1)}$ and construct a smeared charge on the null-infinity sphere by pairing it with an appropriate form. Since the total form degree must be $D-2$, we need a $D-2-(q+1)=D-q-3=p-1$ form. The natural choice is the gauge parameter ${\epsilon} \in \Omega^{p-1}(S_u^{D-2})$ of the $p$-form gauge field:
\begin{equation}
    Q_{\mathrm{smeared}}=\int_{S^{D-2}_u} {\epsilon} \wedge j^{(q+1)};
    \label{sme}
\end{equation}
since the asymptotic charge \eqref{caricapforma} is also given by an integral of a $(D-2)$-form, these two objects must be proportional. Therefore, the 
$r \rightarrow +\infty$ limit of the smeared charge $Q_{\mathrm{smeared}}$ must be proportional to the asymptotic charge computed in Section \ref{anasin}, with the radiative field component entering the construction.

The smeared charge \eqref{sme} reproduces the standard higher-form symmetry charge \eqref{hf} under specific conditions on the support of $\epsilon$ and on the choice of submanifold. Fix a retarded time $u$ and let
$\iota: \Sigma^{p+1}_u \hookrightarrow S^{D-2}_u$ be a smooth, oriented, closed submanifold of
codimension $q+1 = D - p - 1$. We choose $\epsilon \in \Omega^{p-1}(S^{D-2}_u)$
to be a smooth representative of the Poincar\'e dual of $\Sigma^{p+1}_u$, i.e. a
$(p-1)$-form supported in a small tubular neighbourhood of $\Sigma^{p+1}_u$, normalized so that
\begin{equation}
  \int_{S^{D-2}_u} \epsilon \wedge \alpha
  = \int_{\Sigma^{p+1}_u} \iota^* \alpha
\end{equation}
for any smooth $(q+1)$-form $\alpha$ on $S^{D-2}_u$. Taking
$\alpha = j^{(q+1)} = \star H$, we obtain
\begin{equation}
  Q_{\text{smeared}}[\epsilon]
  = \int_{S^{D-2}_u} \epsilon \wedge j^{(q+1)}
  = \int_{\Sigma^{p+1}_u} \star H
  = Q^{(p)}(\Sigma^{p+1}_u) .
\end{equation}
Thus, when $\epsilon$ is chosen
as the Poincar\'e dual of $\Sigma^{p+1}_u$, the smeared charge reproduces the
standard $p$-form symmetry charge. Since $j^{(q+1)}$ is closed, $dj^{(q+1)}=0$,
the charge depends only on the homology class of $\Sigma^{p+1}_u$, as expected
for a topological operator.

In the standard language of generalized global symmetries, $j^{(q+1)}$ plays the role
of the conserved current associated with a $p$-form symmetry, and the operator
$U_\alpha(\Sigma^{p+1}_u) = \exp\bigl(i \alpha Q^{(p)}(\Sigma^{p+1}_u)\bigr)$
defines a codimension $(p+1)$ topological surface operator. For the $p$-form
gauge field $B$, the charged objects are Wilson surfaces
$W_e(C_p) = \exp(i \int_{C_p} B)$ supported on $p$-dimensional submanifolds
$C_p$. When $\Sigma^{p+1}_u$ links $C_p$, the operator $U_\alpha(\Sigma^{p+1}_u)$
acts on $W_e(C_p)$ by a phase determined by the intersection number, in complete
agreement with the standard picture of higher-form symmetries and their
topological operators.

Moreover, $Q_{\mathrm{smeared}}$ can also be regarded as an integrated, $u$-dependent form current on the celestial sphere, which takes an asymptotic gauge parameter as input and returns a scalar obtained by pairing with the field strength of the dual formulation:
\begin{equation}
\mathcal{J}^{(q+1)}[\bullet](u) \;:=\; \int_{S^{D-2}_u} \bullet \wedge j^{(q+1)}\, .
\end{equation}
We can further integrate over retarded time $u$ in order to capture the soft contribution (i.e. the soft modes as $\omega \to 0$):
\begin{equation}
\mathcal{J}^{(q+1)}[\bullet] \;:=\; \int_{-\infty}^{+\infty} \! du \;\, \mathcal{J}^{(q+1)}[\bullet](u)  \, .
\end{equation}
The form functional $\mathcal{J}^{(q+1)}[\bullet]: \Omega^{p-1}(S^{D-2}) \rightarrow \mathbb{R}$ is then interpreted as an integrated $(q+1)$-form current on the celestial sphere $S^{D-2}$ associated with the soft modes of a $p$-form gauge field. When $\mathcal{J}^{(q+1)}$ is evaluated on the gauge parameter $\epsilon$, it returns a scalar that coincides with the $u$-integrated smeared charge. 

Analogous considerations apply to the charge
\begin{equation}
   Q^{(q)}(\Sigma_u^{q+1}):= \int_{\Sigma_u^{q+1}} H,
   \label{hf2}
\end{equation}
where the pairing must now be performed with $\tilde{\epsilon} \in \Omega^{q-1}(S_u^{D-2})$. We denote by $\tilde{Q}_{\mathrm{smeared}}$ the associated smeared charge and by $\mathcal{J}^{(p+1)}[\bullet]$ the corresponding form functional, interpreted as an integrated $(p + 1)$-form current on the celestial sphere $S^2$ associated with the soft modes of a $q$-form gauge field. 

This construction could provide natural candidates for celestial currents associated with electromagnetic-like charges. However, a more careful analysis, based on an explicit oscillator expansion of the fields, extraction of the soft modes, and a Mellin transform, would be required to identify the precise form of the celestial currents. The working hypothesis is that these currents are proportional to a formal complex combination of $\mathcal{J}^{(p+1)}[\bullet]$ and $\mathcal{J}^{(q+1)}[\bullet]$. A detailed investigation of this point is left for future work and it will probably require results in \cite{Donnay:2022ijr}.

These constructions partially address open question $\#3$ of the “open questions of the Simons collaboration on celestial holography” \cite{simons}. Let us clarify this statement. The program of systematically relating asymptotic symmetries to generalized (higher-form) symmetries has its roots in earlier work. In abelian $p$-form theories, the author of \cite{Lake:2018dqm} clarified that the conserved quantities associated with higher-form symmetries have essentially the same structure as the large gauge charges familiar from asymptotic symmetry analyses. Subsequent studies of asymptotic symmetries for dynamical $p$-form gauge fields \cite{Afshar:2018apx,Francia:2018jtb} showed that, at least in certain settings, large $p$-form gauge transformations do carry well-defined charges. However, in those analyses the connection to higher-form symmetries was only implicit: although the charges were derived, they were neither explicitly identified with higher-form symmetry charges nor framed as local refinements of the standard global higher-form charges. Independent developments in holography \cite{Hofman_2018} further suggest that global higher-form symmetries of the boundary theory correspond to bulk gauge invariances of form gauge fields, with boundary charges represented as surface integrals of the bulk field strength. This again points to a common origin with asymptotic charges. 

The present work shows that the smeared charge \eqref{sme} is, on the one hand, proportional to the asymptotic charge and, on the other hand, reproduces the higher-form symmetry charge. Indeed, the $p$-form symmetry charge associated with the local current $j^{(q+1)}=\tilde{H}$ is naturally correlated with a smeared charge constructed by pairing with the $p$-form gauge parameter $\epsilon$, and this charge must be proportional to the asymptotic charge computed earlier. Note also that the smeared charge reproduces the $p$-form symmetry charge if $\epsilon$ is chosen to be constant and supported on $\Sigma^{(p+1)}$. Essentially, up to numerical factors, the asymptotic charge of a $p$-form gauge theory reproduces the $p$-form symmetry charge when the gauge parameter is chosen to be constant and non-vanishing only on a codimension $q+1$ submanifold.\\
These observations suggest that the asymptotic charge algebra may be interpreted as a boundary-refined version of the generalized symmetry algebra, with both emerging from the same underlying geometric structure. In this sense, the results presented here strengthen earlier observations and provide a possible explicit bridge between asymptotic and generalized symmetries.

We also note that the $u$-variation of the smeared charge \eqref{sme} is proportional to the gauge variation of the $p$-form gauge field. Indeed, consider the cylindrical region 
\(\mathcal C_{[u_1,u_2]}\) bounded by the two spheres 
\(S^{D-2}_{u_1}\) and \(S^{D-2}_{u_2}\). By Stokes’ theorem,
\begin{equation}
    \int_{S^{D-2}_{u_2}} \epsilon \wedge j^{(q+1)}
      - \int_{S^{D-2}_{u_1}} \epsilon \wedge j^{(q+1)}
    = \int_{\mathcal C_{[u_1,u_2]}} d\!\left( \epsilon \wedge j^{(q+1)} \right) \, ,
\end{equation}
Using the identity 
\(d(\alpha \wedge \beta)= d\alpha \wedge \beta + (-1)^{\deg \alpha}\alpha \wedge d\beta\)
and the fact that \(dj^{(q+1)}=0\), we obtain
\begin{equation}
    Q_{\mathrm{smeared}}(u_2)-Q_{\mathrm{smeared}}(u_1)
    = \int_{\mathcal C_{[u_1,u_2]}} d\epsilon \wedge j^{(q+1)} \, .
\end{equation}
Therefore, the charge is conserved if one of the following conditions holds:
\begin{itemize}
    \item \(d\epsilon=0\) on the cylinder;
    \item the support of \(d\epsilon\) does not intersect the support of \(j^{(q+1)}\);
    \item the form \(d\epsilon \wedge j^{(q+1)}\) is exact on the cylinder and the boundary contribution vanishes due to boundary conditions.
\end{itemize}
In our setup, however, \(d\epsilon\) coincides with the gauge variation \(\delta_\epsilon B\) of the $p$-form gauge field. Hence,
\begin{equation}
    Q_{\mathrm{smeared}}(u_2)-Q_{\mathrm{smeared}}(u_1)
    =\int_{\mathcal C_{[u_1,u_2]}} \delta_\epsilon B \wedge j^{(q+1)} \, .
\end{equation}
The failure of conservation can be interpreted as an exchange of flux between the $p$-form field and its dual. What appears as a non-conserved smeared charge in one description may correspond to a flux of the dual field strength in the other. Thus conservation is not lost, but rather redistributed between dual sectors. 

Since $Q_{\mathrm{smeared}}$ and $\tilde{Q}_{\mathrm{smeared}}$ are proportional to the asymptotic charges $Q_{p,D}^{(\mathrm{e})}$ and $Q_{q,D}^{(\mathrm{e})}$, it follows from the theorem proved above that there exists a unique map (under the appropriate topological assumptions) such that $Q_{\mathrm{smeared}}$ is proportional to $\tilde{Q}_{\mathrm{smeared}}$ for every choice of $\epsilon$ and $\tilde{\epsilon}$. If we choose the gauge parameters to have support only on $\Sigma^{(p+1)}$ and $\Sigma^{(q+1)}$, respectively, we obtain the relation
\begin{equation}
Q^{(p)}(\Sigma_u^{p+1})\propto Q^{(q)}(\Sigma_u^{q+1}) \quad\Rightarrow \quad \int_{\Sigma_u^{p+1}} \star H \propto \int_{\Sigma_u^{q+1}} H;
\end{equation}
hence the two cycles $\Sigma_u^{q+1}$ and $\Sigma_u^{p+1}$ are related by Poincaré duality in homology.

We define the charge algebra via its action on fields, namely
\begin{equation}
  \{Q_{\mathrm{smeared}}[\epsilon],Q_{\mathrm{smeared}}[\tilde{\epsilon}]\}
  =\delta_{\epsilon}Q_{\mathrm{smeared}}[\tilde{\epsilon}]
  \;=\;\int_{S^{D-2}_u}\tilde{\epsilon}\wedge\delta_{\epsilon} j^{(q+1)} 
  \;+\;\int_{S^{D-2}_u}(\delta_{\epsilon}\tilde{\epsilon})\wedge j^{(q+1)}.
\end{equation}
Since $\delta_{\epsilon}\tilde{\epsilon}=0$ and $j^{(q+1)}$ is gauge invariant under the $p$-form gauge symmetry, because the Hodge operator $\star$ depends only on the metric and not on the gauge degrees of freedom, we have
\begin{equation}
  \delta_{\epsilon} j^{(q+1)} =0 \;\;\Rightarrow\;\; 
  \{Q_{\mathrm{smeared}}[\epsilon]\,,\,Q_{\mathrm{smeared}}[\tilde{\epsilon}]\}=0 .
\end{equation}
Therefore, the smeared charges generate an abelian algebra. Possible non-abelian structures may nevertheless emerge in specific situations. One such situation arises in the presence of gravitational dressing \cite{Giddings_2019,Donnelly_2016+,Giddings_2018,Giddings:2025bkp}. In a quantum theory of gravity, even perturbatively, the definition of a local operator such as $\mathcal{O}(x)$ is meaningless, since the theory must be invariant under diffeomorphisms and the point $x$ has no invariant physical meaning. The resolution is to construct a gravitationally dressed operator. Starting from a bare local field \( \mathcal{O}(x) \), one attaches a geometric displacement \( V^\mu(x) \) generated by the linearized gravitational field, and defines the physical operator as \( \mathcal{O}_{\text{dressed}}(x) = \mathcal{O}\left(x + V(x)\right) \). A particularly simple example is the gravitational line dressing \cite{Donnelly_2016++}, where \( V^\mu(x) \) depends on an arbitrary curve connecting $x$ to infinity and on the graviton field $h_{\mu \nu}$, which encodes perturbations around flat space. The dressed operator is then expanded perturbatively in the graviton field (i.e. in $V(x)$): the zeroth-order term is the bare operator, while higher-order terms involve derivatives of the bare operator and non-local graviton contributions. Gravitational dressing deforms the asymptotic charge algebra, since the theory effectively becomes a coupled $p$-form–Einstein system: the gauge charges remain abelian but no longer decouple from BMS charges, and the full charge algebra may therefore become non-abelian. Moreover, gravitational dressing can be interpreted as a line or surface operator that modifies the gravitational boundary conditions along a submanifold, acting as a topological defect. Such defects implement generalized symmetries, since operators that link them transform non-trivially under shifts of soft gravitational modes, and the corresponding fusion rules can become non-invertible. 

Another possible source of non-abelian structure could arise in the presence of mixed-symmetry tensor gauge fields, which naturally appear in the tensionless limit of the string spectrum. Two key features are relevant here. First, a mixed-symmetry tensor gauge field can admit more than one gauge parameter, due to its Young-tableau symmetry structure. Second, such a field possesses partial field strengths that are not fully gauge invariant, but may be invariant with respect to only a subset of the gauge parameters. The charges associated with these fields therefore involve combinations of partial field strengths, and the corresponding smeared charges inherit this structure. When computing the algebra of smeared charges, we must then evaluate gauge variations with respect to all gauge parameters of the field. As a result, non-vanishing terms may arise from the variation of a partial field strength with respect to a gauge parameter under which it is not invariant. This suggests a direct correlation between the Young symmetry of the mixed-symmetry field, the structure of its partial field strengths, and the emergence of non-abelian features in the algebra of smeared charges.
\section{Conclusions and outlook}

In this work we have studied in a unified manner the duality,
asymptotic charges and generalized symmetries of $p$-form gauge theories in
$D$-dimensional Minkowski spacetime. 

In Section \ref{par2}, mainly containing review material, we begin with a detailed analysis of the Bondi expansion for $p$-form gauge fields and we computed the asymptotic charges under both radiative and Coulombic fall-offs for fields. This produced a universal expression for the electric-like charges \eqref{chargeprad} and \eqref{chargepcoul} in terms of the $H_{rui_1...i_{p-1}}$ components of the field strength, valid for all admissible pairs
$(p,D)$. We then derived, using the Hodge duality equations,
magnetic-like charges for the dual $q$-form theory where $q=D-p-2$. We showed that the two
sectors naturally assemble into a complexified electromagnetic-like charge \eqref{electromagncharge} and that
complex charges transform under duality according to a Möbius transformation
parametrised by an element of $\mathrm{PGL}(2,\mathbb{C})$ given by \eqref{apgl}. This behaviour generalises the
electric-magnetic duality rotation known from $D=4$ Maxwell theory, and suggests a
conceptual bridge between infrared dualities and the conformal structure
characteristic of celestial descriptions of scattering.\\
Motivated by the research of this bridge and going beyond the review material, we propose a novel geometric framework in which celestial
CFTs may be understood as $\mathrm{PGL}(2,\mathbb{C})$-principal equivariant bundles over the
celestial sphere. In this picture, celestial operators arise as sections of associated
bundles, while descendants and OPE structures could be encoded through finite or
infinite jet bundles. Although preliminary, this perspective offers a unified
geometrization of Lorentz symmetry, duality, and conformal weight assignments,
and suggests new tools for organising CCFT spectra. Moreover, the extension of this
bundle-theoretic approach to incorporate spin structures provides a natural setting
for the inclusion of spinorial operators and representations.

A major result, in Section \ref{secTHM} is the proof of an existence and uniqueness theorem for the
duality map relating the asymptotic electric-like charges of the two formulations, Theorem \ref{THM3.1}.
The natural setting for this analysis is the de~Rham complex on Minkowski
spacetime or, more generally, a spacetime satisfying \eqref{cohom}. Within this framework we constructed an explicit linear isomorphism intertwining the charges of the two dual
theories and we showed that this duality map is intrinsically topological in nature.
This observation has several implications: for example, it suggests that in
backgrounds admitting non-trivial cohomology the duality map need not be unique,
and may even fail to exist. Such a breakdown may occur in situations involving
topology changes, a possibility that could be relevant in quantum gravity, where topology is expected to fluctuate.\\
In the same section we discuss the duality problem for diverging/vanishing charges. Note that these divergences could be renormalized by symplectic procedures. On the one hand, the interplay of electric-like and magnetic-like charges at Coulombic order offers the natural picture for the appearance of scalar charges that act at order $\mathcal{O}(r^{-2})$, useful in the description of the pion memory effect. On the other hand, we introduce, in Conjecture \ref{Conj3.1} and motivated by relation \eqref{mapcol}, the idea that if a gauge theory has a divergent asymptotic charge its dual description still has trivial gauge transformations. In view of renormalization issue, since renormalizing a divergent charge amounts to removing divergences that are not physical, this would mean removing
the unphysical (i.e. redundant) gauge transformations from the dual description. 

In Section \ref{sec4} of the work, we established a concrete link between asymptotic charges and the higher-form symmetry charges characteristic of
$p$-form gauge theories. By integrating the field strengths of the two dual
descriptions over suitable codimension submanifolds, we reproduced the standard
higher-form charges, i.e. \eqref{hf}, while smearing the associated local currents with residual
gauge parameters of the dual theory produced a smeared charge, i.e. \eqref{sme}, proportional to the asymptotic one. In particular, our construction addresses part
of the open question concerning the relation between asymptotic and generalized
symmetries, showing that at least for abelian $p$-form theories the two frameworks
are more tightly connected than previously appreciated.

\medskip

\noindent\textbf{Future directions.}
Several paths for further research emerge from the present analysis. Here we present some of them.

First, the topological nature of the duality map invites a study of $p$-form gauge
theories on backgrounds with non-trivial cohomology, where the duality map may
become multi-valued or obstructed. Understanding how this affects asymptotic
charges could provide insight into
infrared behavior in compactified theories and in quantum gravity. One particular interesting background could be that of gravisolitons which have found interest in the swampland conjectures and which have been found to possess asymptotic symmetries different from the background metric.

Second, the connection established here between asymptotic charges and
generalized symmetries suggests that similar mechanisms may exist for
non-abelian higher-form theories or for theories with mixed-symmetry tensor
fields. The analysis of residual gauge transformations and asymptotic
configurations for such more exotic fields remains largely unexplored and may reveal new 
infrared structures. Moreover, a better understanding in cases of gravitational dressing setups has to be taken in consideration. 

Third, the Möbius-equivariant geometrization of celestial CFT developed in this
paper raises several technical questions. It would be valuable to classify the
principal bundle structures compatible with the CCFT operator spectrum, to
investigate how OPE coefficients behave under duality actions on the principal
bundle, and to determine whether the electromagnetic-like duality induces a
non-trivial automorphism on the sheaf of celestial operators, if any. The role of infinite
jet bundles in encoding full Verma modules also deserves further study, especially
in the presence of interactions or loop corrections. As for the classification, in our construction we assumed a topologically trivial bundle. It would be interesting to classify the possible holomorphic structures of such a bundle as well as of its topologically non-trivial case, if suitable for spin lifting.

Finally, extending our results to AdS backgrounds or to asymptotically flat spacetimes
with additional structure (e.g.\ massive modes, non-trivial fluxes, or defects)
would test the robustness of the duality map and may reveal new holographic
signatures. Such generalizations could clarify whether the interplay between
$p$-form dualities, asymptotic symmetries, and generalized charges constitutes a
universal phenomenon or is specific to the simplest flat-space setting.
\bibliographystyle{JHEP} 
\bibliography{References}

\end{document}